\useunder{\uline}{\ul}{}
\journal{TBA}
\newtheorem{theorem}{Theorem}[section]
\newtheorem{corollary}[theorem]{Corollary}
\newtheorem{lemma}[theorem]{Lemma}
\newtheorem{proposition}[theorem]{Proposition}
\newtheorem{definition}[theorem]{Definition}
\newtheorem{remark}[theorem]{Remark}
\newtheorem{example}[theorem]{Example}
\numberwithin{equation}{section}
\renewcommand{\appendix}{\par
  \setcounter{section}{0}
  \setcounter{subsection}{0}
  \gdef\thesection{\Alph{section}}
}
\DeclareMathOperator{\sgn}{sgn}
\DeclareMathOperator{\tr}{tr}
\DeclareMathOperator{\argmin}{argmin}
\begin{document}
	
\begin{frontmatter}
	
\title{Approximately optimal trade execution strategies under fast mean-reversion}

\author[FGV]{D. Evangelista}
\ead{david.evangelista@fgv.br}

\author[IMEUFF]{Y. Thamsten}
\ead{ythamsten@id.uff.br}

\address[FGV]{Escola de Matem\'atica Aplicada (EMAp), Funda\c{c}\~ao Get\'ulio Vargas (FGV), 22250-900, Rio de Janeiro, RJ, Brasil}

\address[IMEUFF]{Instituto de Matem\'atica e Estat\'istica (IME), Universidade Federal Fluminense (UFF), 24210-380, Niter\'{o}i, RJ, Brasil}
	
\date{}
	
	
\begin{abstract}
    In a fixed time horizon, appropriately executing a large amount of a particular asset --- meaning a considerable portion of the volume traded within this frame --- is challenging. Especially for illiquid or even highly liquid but also highly volatile ones, the role of ``market quality'' is quite relevant in properly designing execution strategies. Here, we model it by considering uncertain volatility and liquidity; hence, moments of high or low price impact and risk vary randomly throughout the trading period. We work under the central assumption: although there are these uncertain variations, we assume they occur in a fast mean-reverting fashion. We thus employ singular perturbation arguments to study approximations to the optimal strategies in this framework. By using high-frequency data, we provide estimation methods for our model in face of microstructure noise, as well as numerically assess all of our results.
\end{abstract}	

\begin{keyword}
 Optimal execution; Fast mean-reversion; Stochastic liquidity; Stochastic volatility.
\MSC[2020] 41A60; 49N90; 91G80; 93E20.
\end{keyword}

\end{frontmatter}	

\section{Introduction} \label{sec:intro}

\subsection{The optimal execution problem}

Whenever we want to execute (meaning to liquidate or to acquire) a large volume of a particular asset, several difficulties arise. Basically, the bulk of execution algorithms is to deal, in the most excellent way possible, with the trade-off between two financial complexities: trading costs and the uncertainty in price movements. On the one hand, we manage the latter aspect by fixing a trading horizon and then fractionating the larger trade into smaller ones, i.e., by setting up a trading schedule. On the other hand, some direct trading costs, such as fees from brokerage firms, are easy to handle. However, the same is not valid for some indirect costs --- it is sometimes hard for us to even describe the latter qualitatively, let alone quantify them. Here, we will concentrate on the type of indirect cost known as price impact. This issue is relatively deep, being the subject of many papers in both the empirical and theoretical literature. Early efforts in modeling price impact are \cite{glosten1985bid,kyle1985continuous} --- see also the efforts \cite{back2004information,krishnan1992equivalence} of linking those two. Relevant empirical advances comprise \cite{almgren2005direct,cont2014price}; see also \cite{bouchaud2010price,cartea2015algorithmic,laruelle2018market} for more in-depth discussions on market microstructure and price impact.

A ubiquitous type of impact in the optimal execution literature is the \textit{temporary} one. This market friction represents the excess amount a trader has to pay per share, relative to the marked-to-market price, to consume additional layers of the Limit Order Book (LOB) to have an order she sent filled. Under the assumption that the temporary price impact is linear on the agent's turnover rate, Almgren and Chriss (AC) proposed in the seminal work \cite{almgren2001optimal} their celebrated model. They also considered a \textit{permanent} price impact, in a way to capture the influence of the trades on the dynamics of the asset's price. Further important advances regarding price impacts in optimal execution include the model of Bertsimas and Lo \cite{bertsimas1998optimal} (which is, in a sense, a precursor of the AC model) and the model of Obizhaeva and Wang \cite{obizhaeva2013optimal}, modeling the LOB using supply/demand functions.

It is fair to say that the AC approach to the execution problem led to a flourishing of the field, especially in what regards generalizations of their model. The work \cite{gatheral2011optimal} concerns the use of a geometric Brownian motion instead of an arithmetic one. The papers \cite{cartea2015optimal,gueant2012optimal} address the use of limit orders simultaneously to market orders. Regarding price impacts, a transient type of this cost figures in \cite{gatheral2012transient} as an alternative to its permanent counterpart. Also, \cite{almgren2003optimal} contain contributions relaxing the linearity assumption on the temporary and permanent impacts per share. In \cite{cartea2016incorporating}, authors regard the presence of a background noise affecting price dynamics. Another important assumption of the AC model is its measure of execution quality: they benchmark their performance with the pre-trade price, leading to \textit{Implementation Shortfall} orders. We refer to \cite{cartea2015algorithmic,gueant2016financial} for investigations of the execution problem under other benchmarks. Moreover, it is worthwhile to remark that the AC framework was quite suitable for studying problems other than optimal execution, such as hedging \cite{almgren2016option,bank2017hedging,gueant2017option}. Recently, works such as \cite{evangelista2022price,feron2020price,feron2021price,fujii2019probabilistic,fujii2023mean,fujii2021equilibrium,fujii2022equilibrium,fujii2022mean,fujii2022strong} used this frictional market model to investigate the problem of price formation; see also \cite{fu2022portfolio,micheli2021closed,neuman2023trading} for further related game-theoretic models. 

\subsection{Stochastic volatility and liquidity} \label{subsec:StochVolAndLiq}

For large-capitalization stocks in highly liquid markets, it is commonly reasonable to assume that the temporary impact is either constant or has a deterministic profile --- see, e.g., the third panel in \cite[Figure 2]{cartea2016incorporating}. However, a typical issue is that less liquid or highly volatile assets are more difficult to trade. Indeed, as Almgren describes in \cite{almgren2012optimal}, for assets in the former class, there are some moments in the day when trading is cheap and others when negotiating is expensive; at some times, delaying trades is near to cost less, and at others, doing so results in a lot of volatility risk. For some highly volatile assets such as cryptocurrencies, the assumptions of constant volatility and liquidity are also quite far from true. A subtlety that renders the problem even harder is that these circumstances vary randomly throughout the day. Thus, in these scenarios, adaptive trading strategies can be expected to perform better.

The way \cite{almgren2012optimal} models stochastic volatility and liquidity is by assuming that the temporary price impact coefficient and the volatility of the asset price (which in turn follows an arithmetic Brownian motion) are both stochastic. On top of that, here we argue in favor of using some asymptotic techniques that are closely related to the ones that Fouque et al. applied in \cite{fouque2011multiscale} to several financial problems, especially option pricing. For singular approximations to even more general multiscale optimal control problems, M. Bardi et al. made several advances in the last decades, see \cite{alvarez2002viscosity,alvarez2003singular,alvarez2010ergodicity,bardi2010convergence,bardi2011optimal,bardi2014large,bardi2022deep,bardi2023singular}. The basic modeling assumption is that the underlying stochastic processes have as their drivers fast mean-reverting ones; see \cite[Subsection 3.2]{fouque2011multiscale} for detailed discussions in this matter. In the sequel, we will make the case about how we can see stochastic volatility and liquidity as fast mean-reverting. We will do so by arguing empirically. 

\subsection{The data}

Throughout this paper, we will use a data set $\mathcal{D} = \left\{ L_t \right\}_t$ of level two order book updates of the asset BTCUSDT traded on the Binance spot cryptocurrencies exchange.\footnote{For estimating price impact parameters, using proprietary execution data is more adequate, see \cite{almgren2005direct}. However, using public data also leads to reasonable models, cf. \cite{cartea2015algorithmic} for such an approach, which presents results in line with \cite{almgren2005direct}.}\footnote{We remark that crytocurrency markets provide an appropriate setting for the use of models such as we develop. In effect, since the tick size is usually very small, and we can typically trade really small amounts (such as $10^{-5}$ BTC in the current setting), the assumptions of continuous inventory and continuous prices are good approximations to reality. Those markets have been calling the attention of the trade execution research community; we refer to \cite{jaimungal2023optimal} and the references therein for more discussions on the matter.} It contains all the order book states\footnote{Thus, each $L_t \in \mathcal{D}$ is such that $L_t \in \mathbb{R}^{101},$ as they are formed by the timestamp $t$ (making $L_t \neq L_{t^\prime}$ if $t\neq t^\prime$), as well as $25$ order book layers, each of which is a 4-tuple comprised of a bid price, bid amount, ask size, and ask amount. } (up to twenty-five layers for the ask and for the bid) at each time when Binance sends an update.\footnote{When subscribing to Binance's spot market stream channel, Binance sends at most one order book update each 100ms. When there are more than one update within such a time frame, they aggregate all of them in a single message.} Unless we state otherwise, all plots will be relative to December 19, 2022. We provide a few descriptive statistics of this data set in Table \ref{tab:data}, and the full mid-price path in Figure \ref{fig:btcusdt_path_20231219}, alongside some trade data to bring further insight into this market's behaviour. 

\begin{table}[]
\centering
\begin{tabular}{cc}
Symbol                        & BTCUSDT     \\\midrule
Tick size {[}\${]}            & 0.01        \\\midrule
Bid-ask spread {[}\${]}       & 0.44        \\
                              & (0.27)      \\\midrule
Mid-quote {[}\${]}            & 16676.24    \\
                              & (95.99)     \\\midrule
Number of seconds within      & 0.135       \\
a day per number of LOB       &             \\
updates {[}seconds{]}         &             \\\midrule
Total ask liquidity {[}BTC{]} & 2.0595      \\
                              & (2.3846)    \\\midrule
Total bid liquidity {[}BTC{]} & 2.3766      \\
                              & (2.9920)    \\\midrule
Daily traded volume {[}BTC{]} & 179090.7037 
\end{tabular}
\caption{A brief description of our data set comprising BTCUSDT level 2 LOB data from Binance at December 19, 2022. Whenever there is a quantity in parenthesis, it corresponds to the standard deviation of the element in the corresponding row, where the number above it is the average. We remark that we extracted the ``Daily traded volume'' of a trades data set, distinct from our LOB data.}
\label{tab:data}
\end{table}

\begin{figure}
    \centering
    \includegraphics[scale=0.4]{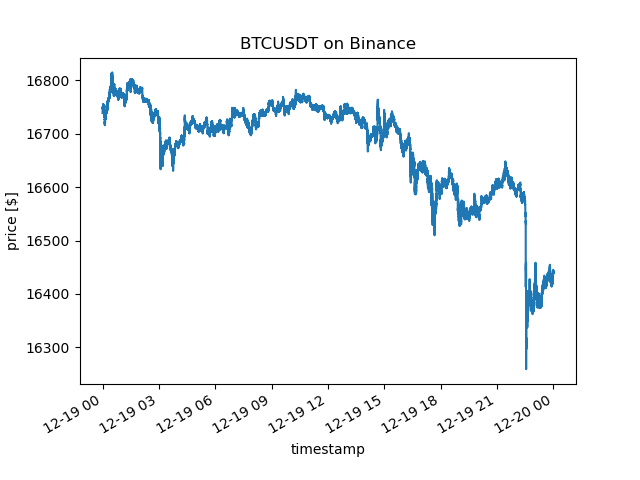}
    \includegraphics[scale=0.4]{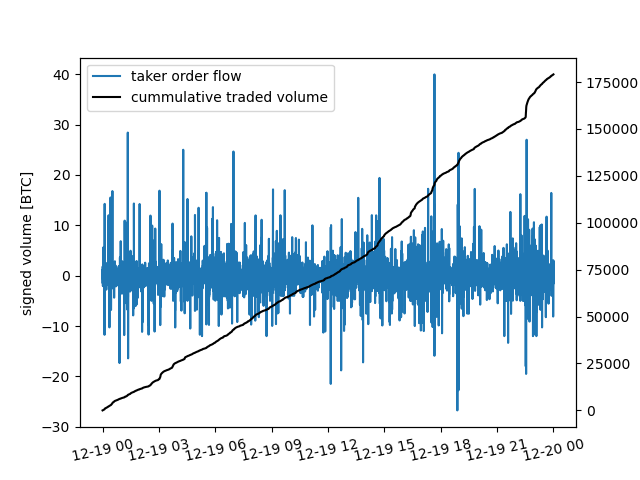}
    \caption{On the left, we present Binance's BTCUSDT price trajectory on December 19, 2022. On the right, we illustrate the taker order flow in this market at this day.}
    \label{fig:btcusdt_path_20231219}
\end{figure}

\subsection{Estimation of the temporary price impact coefficient}

Let us begin by discussing the power-law we stipulate for the temporary price impact. From here on, we will focus on liquidation programs, interacting with the LOB by sending sell market orders; hence, we will concentrating on the bid side of the order book snapshots. At each timestamp $t,$ we take a state $L_t$ of $\mathcal{D}.$ Then, for each hypothetical sell volume $\nu,$ we walk as many layers of $L_t$ as necessary for our trade to be executed, resulting in a realized price per share $R_t(\nu).$ Using Physics jargon, $R_t(\nu)$ is the realized time $t$ price per share corresponding to a \textit{virtual} trade of volume $\nu.$ More rigorously, the unit of $\nu$ is not that of volume, but it represents our theoretically continuous rate of trading (whence it is measured in terms of volume per unit of time), so $R_t(\nu)$ is the result of its instantaneous interaction with the market at time $t.$ In this way, we compute the impact per share $I_t(\nu)$ undergone by this trade as the difference between the current bid price $S_t$ and the realized price per share $R_t(\nu),$ i.e., $I_t(\nu) := S_t - R_t(\nu);$ see Figure \ref{fig:emp_curve} for an illustration. We emphasize that we compute $I_t(\nu),$ for each $\nu,$ directly via the data in $L_t.$ The power-law assumption (see, e.g., \cite[Eq. (8)]{bouchaud2010price} and the references therein) postulates that
\begin{equation} \label{eq:PowerLaw}
    I_t(\nu) = \kappa \nu^\phi.
\end{equation}
We illustrate \eqref{eq:PowerLaw} in Figure \ref{fig:emp_curve_with_fit}. We remark that there is strong evidence for concavity of the temporary price impact in the empirical literature, i.e., $\phi \in \left]0,1\right],$ see \cite{lillo2003master,loeb1983trading}. We will corroborate this stylized fact in our particular experiments.

\begin{figure}
    \centering
    \includegraphics[scale=0.4]{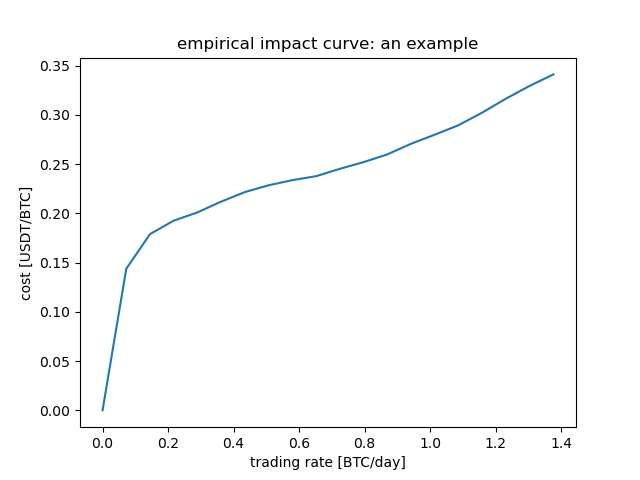}
    \includegraphics[scale=0.4]{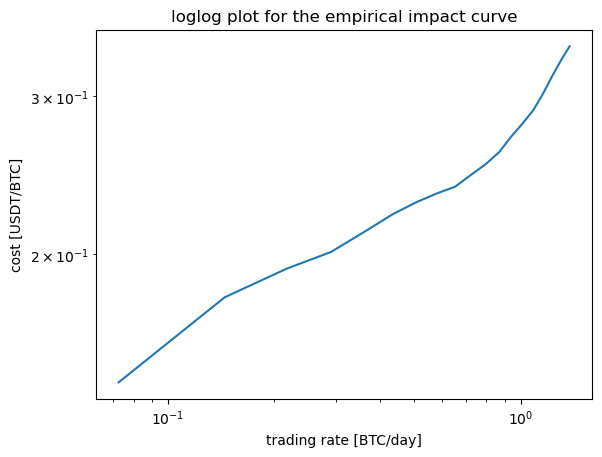}
    \caption{Empirical impact curve $\nu \mapsto I_t(\nu)$ for a given $t.$ On the right, we present the corresponding loglog plot of it.}
    \label{fig:emp_curve}
\end{figure}

\begin{figure}
    \centering
    \includegraphics[scale=0.4]{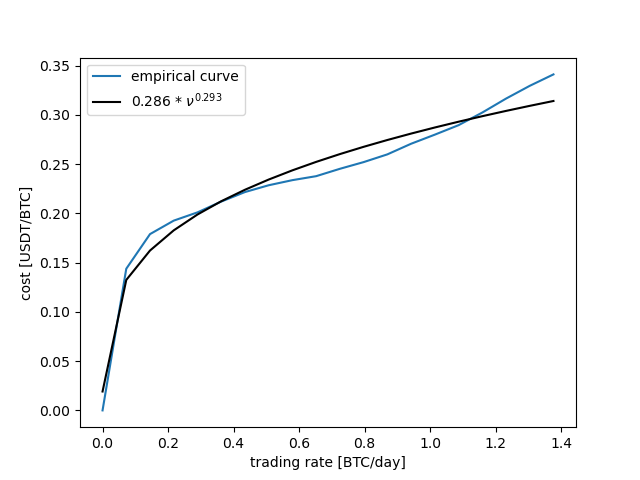}
    \caption{The power law fit for a given impact curve $\nu \mapsto I_t(\nu).$}
    \label{fig:emp_curve_with_fit}
\end{figure}

In this work, we will assume that the exponent $\phi$ in relation \eqref{eq:PowerLaw} is constant, but we will model stochastic liquidity in a similar way as in \cite{almgren2012optimal}: by assuming that $\kappa = \left\{ \kappa_t \right\}_t$ is stochastic. We carry out the estimation of them via a two step procedure, which we now describe.
\begin{itemize}
    \item[\textbf{1.}] We first estimate $\phi$ using a bagging methodology, cf. \cite{breiman1996bagging}. Namely, we fix positive integers $M$ and $N,$ and we create $M$ subsets $\mathcal{D}_1,\,\ldots,\,\mathcal{D}_M$ of $\mathcal{D},$ each of which comprising $N$ order book states sampled randomly but with replacement from $\mathcal{D}$ (bootstrapping). Then, for each $i \in \left\{ 1,\,\ldots,\, M \right\},$ we solve\footnote{We use a BFGS algorithm to solve this minimization problem.}
    $$
    \left(\widehat{\kappa}_i,\,\widehat{\phi}_i\right) = \argmin_{(\kappa,\,\phi)} \sum_{\substack{ L_t \in \mathcal{D}_i,\, \\ \nu \in \left[v_t,V_t\right]}} \left( I_t(\nu) - \kappa |\nu|^{\phi} \right)^2,
    $$
    where we regard $v_t := \inf\left\{ \nu : I_t(\nu) > 0 \right\},$ while we denoted by $V_t$ the sum of all of the bid amounts from the first $L_t$ layer to the last one.\footnote{In particular, since there are $25$ layers, $V_t > v_t$.} Then, we get our estimate $\widehat{\phi}$ of $\phi$ by aggregating:
    $$
    \widehat{\phi} = \frac{1}{M}\sum_{i=1}^M \widehat{\phi}_i. 
    $$
    \item[\textbf{2.}] Next, we fix a lookback period $w \geq 0$. For each timestamp $t,$ we estimate $\kappa_t$ as the slope $\widehat{\kappa}_t$ of the following linear regression\footnote{Here, we use Ordinary Least Squares.}:
    \begin{equation} \label{eq:LinRegForKappa}
        I_{s}(\nu) = \widehat{\kappa}_t |\nu|^{\widehat{\phi}} + \eta_{s} \hspace{1.0cm} \left(s \in \left[(t-w)_+,t\right] \text{ such that } L_s \in \mathcal{D},\, \nu \in \left[v_s,\,V_s\right]\right).
    \end{equation}
    Above, for each timestamp $s,$ we consider $v_s$ and $V_s$ as in the previous step, and we have written $(t-w)_+ := \max\left\{t-w,0\right\}$.
\end{itemize}

The bagging methodology we conduct in step one above seems adequate because it fits exponents for various batch of books, whence we expect the exponent $\widehat{\phi}$ we estimated to work decently in a uniform manner. From Table \ref{tab:exponent_estimation_params}, we also see that the variance in our estimate is quite small, indicating an adequate fit. Bagging methods are commonly appropriate to reduce predictors' variance and reduce overfitting; we again refer to \cite{breiman1996bagging} and the references therein for a more detailed account.\footnote{It is also worth mentioning that it points out how bagging works well for unstable procedures. It seems to be the case for financial high-frequency settings, where we have the presence of microstructural noise.} 

\begin{table}[]
\centering
\begin{tabular}{@{}cccccc@{}}
\toprule
$\widehat{\phi}$ & 1\%    & 25\%   & median & 75\%   & 99\%   \\ \midrule
0.2833           & 0.2631 & 0.2763 & 0.2847 & 0.2911 & 0.3046 \\
(0.0116)         &        &        &        &        &        \\ \bottomrule
\end{tabular}
\caption{Our bagging estimate $\widehat{\phi}$ of the exponent $\phi$ and some of its corresponding quantiles. In parenthesis, the standard deviation when aggregating to form the final estimate. Here, we fixed $M=20,$ $N = 2000,$ and $w=1$ second.}
\label{tab:exponent_estimation_params}
\end{table}

Regarding step two, taking $w=0$ leads to performance of the regressions in \eqref{eq:LinRegForKappa} in an update-by-update manner. Using a time window $w>0$ is in line with what we did in the previous step. It will aggregate a few order books for each update time $t,$ and provide an estimate working for all of them throughout a certain (small) time frame. Not only using $w>0$ helps filter out microstructural noise, but it is also consistent with the fact that the trader is subject to latency, so whenever she wants to interact with the LOB, say at time $t,$ she will do so with an uncertain state $L_s,$ with $s \in \left[t,\,t+w\right],$ for some $w$ (the latency\footnote{We remark that latency is stochastic itself. Here, we can think of a constant $w$ as its average value, for instance.} of her infrastructure). 

In Figure \ref{fig:tic_paths}, we present an example of the estimated path for $\left\{\widehat{\kappa}_t\right\}_t$ for our reference data set. In all examples of this work, we fix $M=20$\footnote{We also ran the estimations with more trials (greater $M$) but it did not yield an estimate too far from the current one.} and $N=2000$ in step one of our bagging estimation algorithm, as well as a lookback window of $w=1$ second. 

\begin{figure}
    \centering
    \includegraphics[scale=0.4]{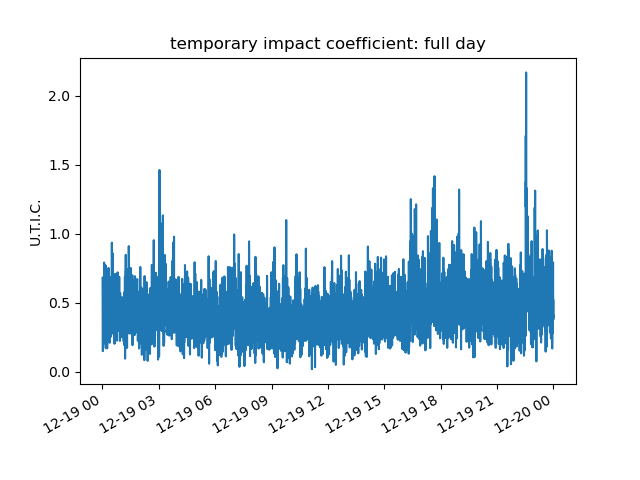}
    \includegraphics[scale=0.4]{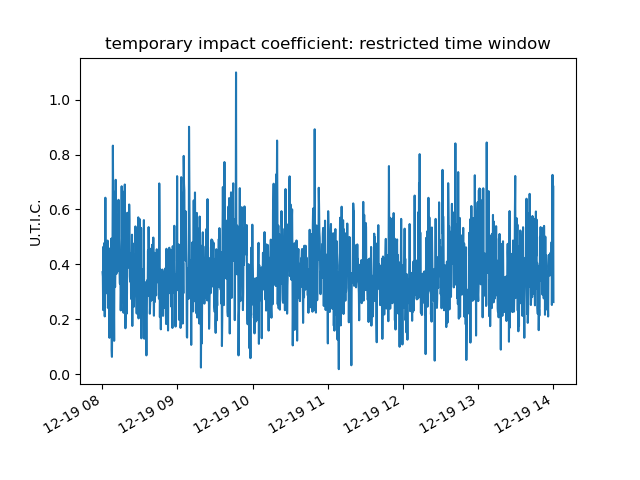}
    \caption{Realization of the process $\kappa$ over December 19, 2022 for the market BTCUSDT traded on the exchange Binance. The plot we refer to as ``restricted'' comprehends only the time window between $8$ AM and $2$ PM, where that mean value of the temporary price impact seems to be more stable. In the plots above, we have written ``U.T.I.C.'' meaning ``Units of Temporary Price Impact Coefficient'', given by: $[ \kappa ] = [\text{cash}]\times [\text{time}]^\phi \times [\text{volume}]^{-(1+\phi)}$. We derive the U.T.I.C. from \eqref{eq:PowerLaw}, using that $[\nu]$ is units of volume per units of time, whereas $[I_t(\nu)]$ is units of cash per units of volume. In the present illustrations, our unit of cash is one USDT, our unit of time is one day, and our unit of volume is one BTC.
    }
    \label{fig:tic_paths}
\end{figure}

\subsection{Estimation of intraday volatility}

We estimate the intraday volatility by applying the Two-Scale Realized Variance (TSRV) method of Zhang, Mykland, and Aït-Sahalia \cite{zhang2005tale}. We remark it is also a bagging-like estimator, which is based on averaging the estimated variance on subsamples, and then correcting the biases. The work \cite{gatheral2010zero} studies several such volatility estimators --- among which TSRV. From their results, we expect this procedure to yield a decent estimation in the face of microstructure noise. Since we focus on liquidation programs, we estimate the asset's bid price volatility. We fix a lookback time $\Delta,$ and for each update time $t,$ we gather all bid prices over $\left[(t-\Delta)_+,\,t\right]$ whenever they change\footnote{We do not sample repeated prices since in this way we typically obtain better estimates.} and use them to get the TSRV estimate $\Sigma_t^2.$ In the TSRV estimator, we employ a maximum subsampling bandwidth of size five. In order to assess the reasonableness of our estimate, we form the price differences
$$
Z_t = \frac{p_{t} - p_{(t-\Delta)_+}}{\Sigma_t},
$$
where $\left\{p_t\right\}_t$ is the bid price time series. Henceforth, we fix $\Delta = 1$ minute. We present statistics of the random variables consisting of samples of the process $\left\{ Z_t \right\}_t$ in Table \ref{tab:stats_samples}, and some corresponding illustrative plots in Figures \ref{fig:hists_normal}, \ref{fig:qq_plots_normal} and \ref{fig:sqrt_tsrv}, both for the whole day and a restricted six hour time window from $8$ AM to $2$ PM. During this time window, the mean of volatility tends to be more stable,\footnote{We can make the overall mean more stable, e.g., by de-seasonalizing the volatility. The work \cite{fouque2021optimal} discusses extracting a seasonal profile, which can be replicated here. For the $8$ AM to $2$ PM time frame, assuming that the mean of the volatility is stable is quite reasonable for Binance's BTCUSDT market, at least as of the period consisting of December 2022 days. We could also model the mean of the volatility itself as being stochastic; the techniques we develop here also serve to treat this case. We choose not to do pursue such endeavors for the sake of simplicity.} so from here on we will focus on it. From those tables and figures, we see that the estimate constrained to the restricted time window is also rather decent. In view of Table \ref{tab:stats_samples}, the empirical variances of $\left\{ Z_t \right\}_t$ are reasonably close to one, and one can check that this is consistent at least throughout the days of December 2022.

\begin{table}[]
\centering
\begin{tabular}{@{}ccc@{}}
\toprule
Time window & Sample mean & Sample variance \\ \midrule
Full day    & 0.0142      & 0.8865          \\
Restricted  & 0.0139      & 0.7888          \\ \bottomrule
\end{tabular}
\caption{Here, we present a few statistics for the random variables $Z$ sampled once each $15$ seconds. The restricted time window comprehends trading from $8$ AM to $2$ PM.  }
\label{tab:stats_samples}
\end{table}

\begin{figure}
    \centering
    \includegraphics[scale=0.4]{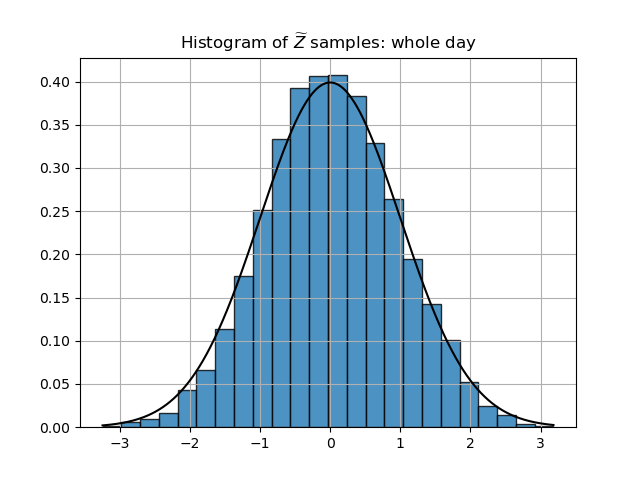}
    \includegraphics[scale=0.4]{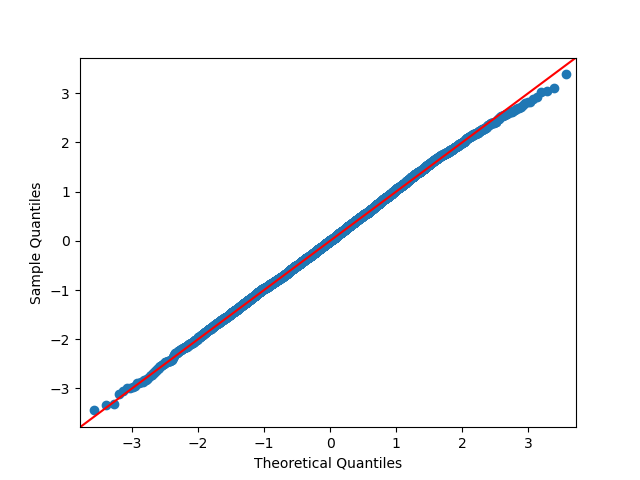}
    \caption{Histogram and QQ-plot for the normalized $Z$ random variables $\widetilde{Z}=Z / \mathbb{V}(Z)$ over the full day. Here, we compute the variance $\mathbb{V}(Z)$ of $Z$ by sampling $Z$ over the full day.}
    \label{fig:hists_normal}
\end{figure}

\begin{figure}
    \centering
    \includegraphics[scale=0.4]{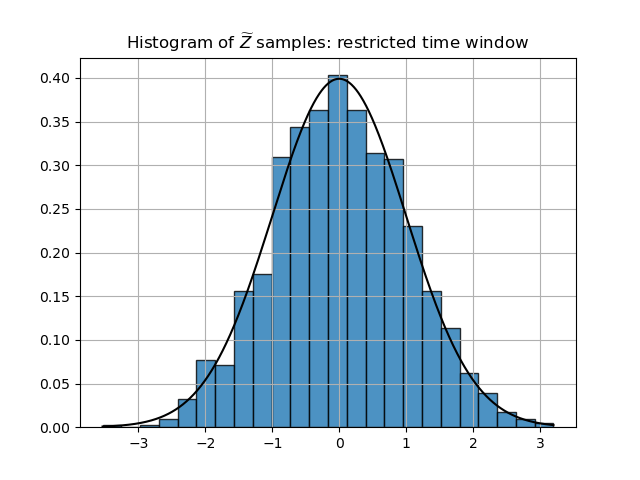}
    \includegraphics[scale=0.4]{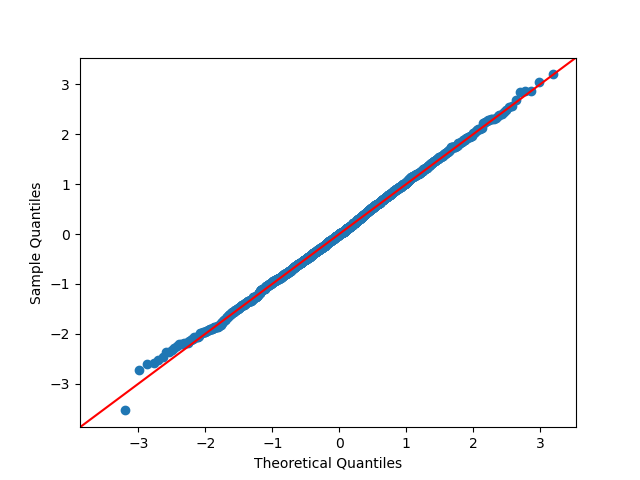}
    \caption{Histogram and QQ-plot for the normalized $Z$ random variables $\widetilde{Z}=Z / \mathbb{V}(Z)$ over the restricted time window.  Here, we compute the variance $\mathbb{V}(Z)$ of $Z$ by sampling $Z$ over the restricted time window.}
    \label{fig:qq_plots_normal}
\end{figure}

\begin{figure}
    \centering
    \includegraphics[scale=0.4]{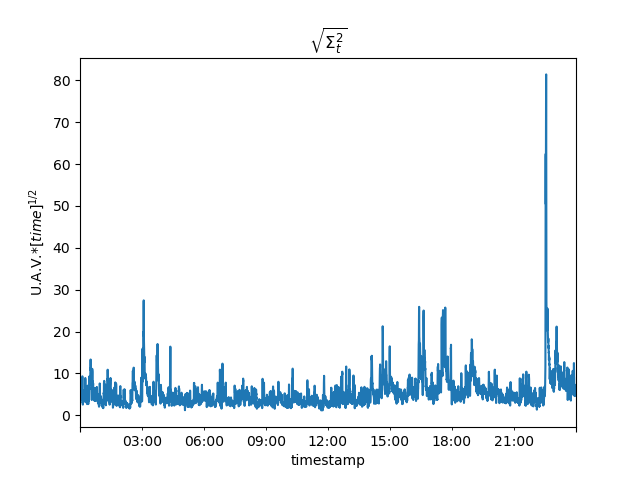}
    \includegraphics[scale=0.4]{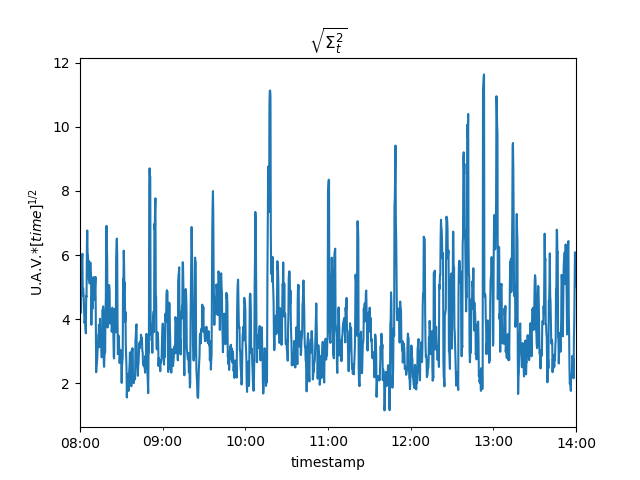}
    \caption{Realization of the square root of the TSRV's realized variance on December 19, 2022 for the bid price. Here, we used ``U.A.V.'' to represent ``Units of Arithmetic Volatility'', which are $[\sigma] = [\text{cash}] \times [\text{volume}]^{-1} \times \text{time}^{-1/2}.$ We can derive this from the fact that the unit of the price is units of cash per unit of volume, whereas the volatility $\sigma$ is related to the integrated variance $\Sigma^2$ as $\sigma=\sqrt{\Sigma^2/\Delta}.$ Presently, our unit of cash is one USDT, our unit of time is one day, and our unit of volume is one BTC.}
    \label{fig:sqrt_tsrv}
\end{figure}

Our natural estimate $\widehat{\sigma}_t$ for the intraday volatility at time $t$ is thus
$$
\widehat{\sigma}_t = \sqrt{\frac{\omega^2\Sigma_t^2}{\Delta}},
$$
where $\omega^2$ is the empirical variance of $\left\{ X_t \right\}_t$ over the restricted time window from $8$ AM to $2$ PM.\footnote{We make the in-sample correction the TSRV by $\omega^2$ so as to make the subsequent paremeter estimations better. In practice, we do not need to make it, as $\omega^2$ is usually sufficiently close to one (as it is now), whence the original estimate uses to be quite decent.} We present in Figure \ref{fig:logvol_fullday_restricted} plots of our estimation of the log-volatility process for the full trading day and for the restricted time frame.

\begin{figure}
    \centering
    \includegraphics[scale=0.4]{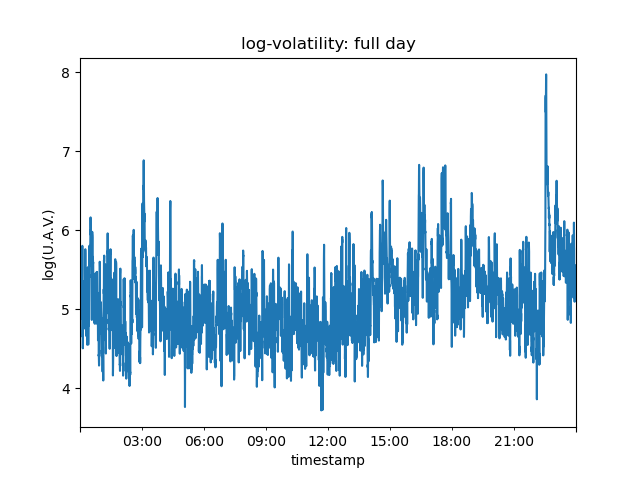}
    \includegraphics[scale=0.4]{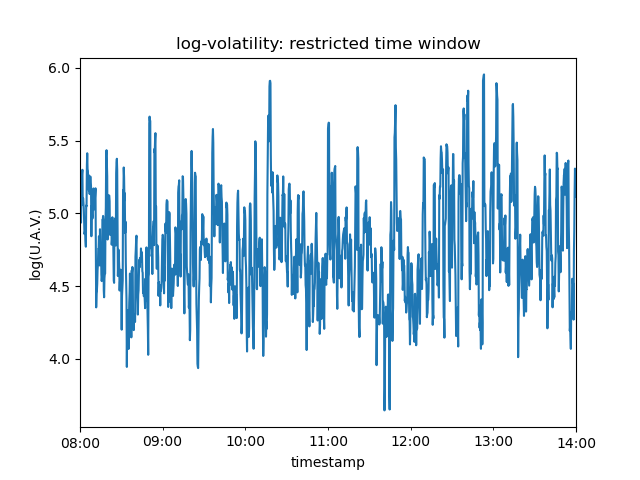}
    \caption{Realized log-volatility process for both the full day and the restricted time window. We employ $\Delta = 1$ minute. Here, our unit of cash is one USDT, and our unit of time is one day.}
    \label{fig:logvol_fullday_restricted}
\end{figure}

Relating this estimate of $\widehat{\sigma}$ to that of $\widehat{\kappa}$ in the previous subsection, we present empirical correlation $\rho$ between those process is
\begin{equation} \label{eq:emp_corr_kappa_logvol}
    \rho = 20.97\%.
\end{equation}
In particular, we notice that it is positive, which makes sense: higher (resp., lower) volatility and lower (resp., higher) liquidity\footnote{Lower (resp., higher) liquidity corresponds to higher (resp., lower) trading costs, meaning higher (resp., lower) values of $\kappa$.} are commonly associated in cryptocurrencies (in general, in highly volatile markets). 

\subsection{Fast mean-reversion}

We now argue that it is reasonable to expect to model both the temporary price impact coefficient and the uncertain intraday volatility as fast mean-reverting. We do so by fitting Ornstein-Uhlenbeck (OU) processes to the corresponding estimated data, wherefrom we will see that their speeds of mean reversion are sufficiently high. In general, given a OU process $x = \left\{x_t \right\}_{t \in \left[0,T\right]},$ $T>0,$ with speed of mean reversion $\lambda_x > 0,$ long-run mean $m_x \in \mathbb{R}$ and diffusion coefficient\footnote{We avoid to call $\eta_x$ ``volatility'' here so as not to confuse it with the price volatility $\sigma$ we were discussing before, which is central to the current work.} $\eta_x > 0,$ i.e.,
$$
dx_t = \lambda_x(m_x-x_t)\,dt + \eta_x\,dW_t,
$$
for a given Brownian motion $\left\{W_t\right\}_t$, we estimate these parameters as in \cite[Eq. (49)]{holy2018estimation}. Namely, we run the ARMA(1,1) regression
\begin{equation} \label{eq:RegressionForOU}
    x_{t_i} = a + b x_{t_{i-1}} + c \epsilon_{i-1} + \epsilon_i,
\end{equation}
for a time sampling $0 = t_0 < \ldots < t_N = T,$ with $t_i - t_{i-1} = dt$ ($dt$ being independent of $i$), and some i.i.d. $N(0,\gamma^2)$ random variables $\left\{\epsilon_i\right\}.$ We then set
\begin{equation} \label{eq:EstimatedParamsOU}
    \widehat{\lambda}_x = -\frac{\log(b)}{dt},\, \widehat{m}_x = \frac{a}{1-b} \text{ and } \widehat{\eta}_x = \gamma\sqrt{-\frac{2\left(b + b c^2 + b^2 c  + c \right)\log(b)}{dt (1-b^2) b}}.
\end{equation}
We apply this technique to $\widehat{\kappa} = \left\{\widehat{\kappa}_t\right\}_t$ and $\log\left(\widehat{\sigma}\right) = \left\{\log\left(\widehat{\sigma}_t\right)\right\}_t,$ see Figure \ref{fig:panels_kappa_logvol}. In carrying out the estimates, we restrict the processes to the window starting at $8$ AM and ending at $2$ PM, where we can see from Figures \ref{fig:tic_paths} and \ref{fig:logvol_fullday_restricted} that their long-term means are more stable.\footnote{We could also seek modelling the processes using double OU processes, where means are themselves mean-reverting (possibly slowly). We can approach the problem under this assumption with the same techniques we use here, cf. \cite{fouque2011multiscale}.} Moreover, we sample the processes once each $15$ seconds --- which leads to more stable estimates --- filtering out some microstructure noise. Hence, under those constraints, we model $\kappa$ as an OU process, whereas regarding $\sigma$ as an expOU one. More precisely, we run the regression \eqref{eq:RegressionForOU} using $\left\{ x_t\right\}_t$ as either the process $\kappa$ itself, or the natural logarithm of $\sigma$, in both cases with a proper downsampling, and then we showcase in Table \ref{tab:fast_mr_params} the parameters we estimate for them according to \eqref{eq:EstimatedParamsOU}. Our results corroborate the claim that we can safely regard them both as fast mean-reverting. 

\begin{figure}
    \centering
    \includegraphics[scale=0.5]{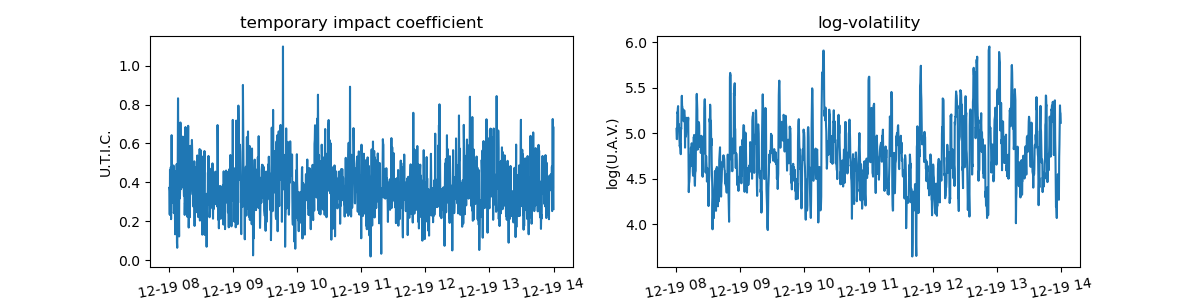}
    \caption{Panels comprising the estimated realizations of $\kappa$ and $\log(\sigma)$ on December 19 for the BTCUSDT market on Binance. In the above panels, our cash unit is one USDT, our time unit is one day, and our volume unit is one BTC.}
    \label{fig:panels_kappa_logvol}
\end{figure}

\begin{table}[]
\centering
\begin{tabular}{@{}cccc@{}}
\toprule
$x$                & $\lambda_x$ & $m_x$   & $\eta_x$ \\ \midrule
$\kappa$           & $1905.2180$ & $0.3782$ & $4.0134$   \\
$\log(\sigma)$     & $1279.7954$ & $4.7810$ & $19.0326$  \\ \bottomrule
\end{tabular}
\caption{High-frequency estimation of mean-reversion parameters. In doing so, we sample each process $x$ once each $15$ seconds. Our time unit is days, whence $dt = 15/(60*60*24).$ }
\label{tab:fast_mr_params}
\end{table}

\subsection{Related literature}

The paper \cite{almgren2012optimal} deals with a model comprising stochastic temporary price impact, but assuming linearity, i.e., that $\phi = 1$ in \eqref{eq:PowerLaw}. They also allow volatility to be stochastic and devise a numerical method for computing the optimal strategy under suitable assumptions. In a discrete-time setting, the work \cite{cheridito2014optimal} models stochastic volatility and liquidity as independent processes in a Markov chain. See also \cite{walia2006optimal} for a discrete-time discrete-space solution for the problem under discussion. In a game-theoretic framework, \cite{evangelista2020finite} considers a market model with stochastic volatility and liquidity. We also refer to \cite{souza2021regularized} for theoretical and numerical results about the optimal strategy of the model we will investigate in the current work, but not necessarily in an ergodic setting. 

Some other works consider stochastic price impact only, regarding volatility as being constant. The use of jump processes for modeling the stochastic price impact is the approach of \cite{moazeni2013optimal,bayraktar2011optimal} --- see also other frameworks for studying stochastic price impacts in \cite{becherer2018optimal,fruth2019optimal}. The work in \cite{konishi2002optimal} addresses optimal slicing of VWAP orders under stochastic volatility without considering price impact. The papers \cite{graewe2018smooth,horst2020continuous} allow for the uncertainty of both the price impact and the risk aversion --- the latter including stochastic volatility (if we assume that the urgency parameter of the trader is proportional to the variance of the asset price, say). A few other efforts model uncertain resilience, such as \cite{siu2019optimal}, extending the OW model, under regime-switching stochastic resilience, and also \cite{graewe2017optimal}. As for applying fast mean-reversion asymptotic techniques to problems in finance, we mention the standard monograph \cite{fouque2011multiscale} and the references therein. The work \cite{fouque2021optimal} applies such techniques to optimal trading, but assuming that volatility is constant and a linear impact ($\phi = 1$) in \eqref{eq:PowerLaw}.  

\subsection{Our contributions}

We consider a model with stochastic liquidity, modeling it as an uncertain temporary price impact subject to the power-law \eqref{eq:PowerLaw}, determining the randomly varying coefficient. Together with the latter parameters, we also allow volatility to be stochastic, and we assume a multi-dimensional Markov diffusion drives their dynamics. We concentrate on the class of assets for which it is realistic to regard the speeds of mean-reversion towards a long-run level as being sufficiently large, in a way to be made precise. The reference framework under which we carry out our numerical experiments is when this Markov diffusion is a two-dimensional OU process.

The way we identify the optimal trading strategy is the same as in \cite{souza2021regularized}. Fortunately, the rate we obtain in the regularized problem is uniformly bounded with respect to the small parameter with respect to which we wish to develop our asymptotic analysis. We begin our investigation by conducting a formal asymptotic analysis, from where we will derive a leading-order approximation for the optimal trading strategy. We proceed to provide some numerical illustrations --- using the parameters we obtained from our estimations --- to illustrate the behavior of the trading rate we derived. Then, we continue our formal analysis to derive the first-order correction to our approximately optimal strategy. We then carry out numerical assessments, analogous to the ones we previously discussed, but we construct now for the first-order approximation.

Finally, we provide some accuracy results establishing that the two approximations we obtained do have the order of approximation we expect of them. The idea of both proofs is to linearize the equations in a way to make feasible the application of the usual Feynman-Kac Theorem. We obtain ``reflexive'' representations for the error terms, i.e., representations of these as fixed-point relations. Under the suppositions we make, we are apt to carry out suitable estimates and employ Gronwall's Lemma to deduce the asymptotics we desire. For the leading-order approximation, we prove a pointwise result in a somewhat direct manner. For the first-order correction, we provide a result on the size of the error term computed over the paths of the multi-dimensional driver. In order for the latter to hold uniformly with respect to time, we need the aid of an appropriate weight. From this, a pointwise result uniformly away from the terminal time follows. We are also apt to show that the desired accuracy for the first-order correction holds on the homogeneous average in time as well.

\subsection{Structure of the paper}

We organize the remainder of the paper as follows. We finish this introductory Section by fixing some notations and terminologies. Then, we present the details of our model and describe some results established elsewhere, with appropriate references, in Section \ref{sec:theModel}. In Section \ref{sec:OrderZero}, we carry out the formal analysis for the derivation of the leading-order approximation, as well as corresponding numerical experiments. We do a similar procedure regarding the first-order correction in Section \ref{sec:OrderOne}. In Section \ref{sec:accuracy}, we give accuracy results for the approximations we derived. We provide our conclusions in Section \ref{sec:conclusions}.

\subsection{Some notations and terminologies}

\begin{itemize}
    \item Henceforth, we fix the terminal time horizon $T>0,$ as well as a complete filtered probability space $\left(\Omega,\mathcal{F},\mathbb{F} = \left\{ \mathcal{F}_t \right\}_{0\leqslant t \leqslant T},\mathbb{P}\right),$ with $\mathcal{F}_T = \mathcal{F}.$ We suppose that this space supports a one-dimensional Brownian motion $B,$ and also a $d-$dimensional one $\boldsymbol{W} = \left(W^1,\ldots,W^d \right)^\intercal,$ where $d \geqslant 1.$ We consider $\mathbb{P}$ as the statistical (or historical measure) --- we will work under it throughout the present work, writing all the expectations (including the conditional ones) under $\mathbb{P}.$ Moreover, for a multi-dimensional Markovian process $\boldsymbol{x} = \left\{\boldsymbol{x}_u\right\}_u,$ we put
    $$
    \mathbb{E}_{t,\boldsymbol{x}}\left[ \cdot \right] := \mathbb{E}\left[ \cdot | \boldsymbol{x}_t = \boldsymbol{x} \right] \hspace{1.0cm} (t \in \left[0,T\right]).
    $$
    \item For a probability measure $\Pi$ on the Euclidean space $\mathbb{R}^d,$ let us write $f \in L^1\left( \Pi \right)$ if, and only if, $f$ is measurable and $\int_{\boldsymbol{y} \in \mathbb{R}^d } |f(\boldsymbol{y})| \Pi(d\boldsymbol{y}) < \infty.$ In this case, we write $\left\langle f \right\rangle := \int_{\boldsymbol{y}\in \mathbb{R}^d} f(\boldsymbol{y})\Pi(d\boldsymbol{y}).$
    \item The letter $C$ denotes a generic positive constant, which may change from line to line within estimates. Unless we state otherwise, $C$ possibly depends on all model parameters.
    \item We write $f = g + O(h),$ for three functions $f,\,g$ and $h,$ to mean that $|f-g| \leqslant C |h|$ pointwise. Whenever $h$ is a model parameter (thus a constant function), we allow $C$ to depend on the point which we calculate $f-g.$ Generally, in case we want to emphasize the dependence of $C$ on a variable $\theta,$ we write $f = g + O_\theta(h).$
    \item We will consider, for each $t \in \left[0,T\right],$ the admissible control set $\mathcal{U}_t$ comprising the $\mathbb{F}-$progressively measurable processes $\nu = \left\{ \nu_u \right\}_{t\leqslant u \leqslant T}$ such that $\mathbb{E}\left[ \int_t^T \left(\nu_u\right)^2\,du\right] < \infty.$ 
\end{itemize}

\section{The model} \label{sec:theModel}

\subsection{Dynamics of the state variables}

Beginning at a time $t \in \left[0,T\right],$ we consider an agent who is negotiating a financial instrument with price process\footnote{As long as the resulting strategy does not lead to price manipulation, as in \cite[Corollary 3.10]{souza2021regularized}, we can consider $S$ as the bid (respectively, ask) price for a liquidation (respectively, acquisition) execution program, as we did in Section \ref{sec:intro}.} $S = \left\{ S_u \right\}_{t\leqslant u \leqslant T}$ satisfying
$$
\begin{cases}
dS_u = \sigma_u\,dB_u,\\
S_t = s .
\end{cases}
$$ 
for a volatility process $ \left\{ \sigma_u \right\}_u.$ We denote the trader's turnover rate at time $u \in \left[t,T\right]$ by $\nu_u,$ whence her inventory holdings evolve according to
$$
\begin{cases}
dQ^\nu_u = \nu_u\,du,\\
Q^\nu_t = q,
\end{cases}
$$
where we assume that her initial inventory $q$ is given. The agent incurs a temporary price impact whose value per share is proportional to $\left|\nu \right|^\phi,$ for some $\phi \in \left]0,1\right],$ in such a way that her execution price per share $\widehat{S}^\nu$ at time $u$ is
$$
\widehat{S}^\nu_u = S_u + \kappa_u\left|\nu_u\right|^\phi\sgn\left(\nu_u\right).
$$
We emphasize that we allow $\left\{ \kappa_u \right\}_{t\leqslant u \leqslant T}$ to be a stochastic process above. The resulting agent's cash process is thus
$$
\begin{cases}
dX^\nu_u = -\widehat{S}^\nu_u \nu_u\,dt = -S_u \nu_u \,du - \kappa_u|\nu_u|^{1+\phi}\,du,\\
X^\nu_t = x.
\end{cases}
$$

From now on we assume (with slight abuse of notations) that $\kappa$ and $\sigma$ are such that
$$
\kappa_u = \kappa\left( \boldsymbol{y}_u\right) \text{ and } \sigma_u = \sigma\left( \boldsymbol{y}_u \right),
$$
for suitable deterministic continuous functions $\kappa, \sigma : \mathbb{R}^d \rightarrow \mathbb{R},$ and a $d-$dimensional Markov diffusion $\boldsymbol{y}:$
\begin{equation} \label{eq:EvolutionOfFactorEps}
    \begin{cases}
        d\boldsymbol{y}_u = \frac{1}{\epsilon}\boldsymbol{\alpha}\left(\boldsymbol{y}_u\right)\,du + \frac{1}{\sqrt{\epsilon}}\boldsymbol{\beta}\left(\boldsymbol{y}_u\right)\,d\boldsymbol{W}_u,\\
        \boldsymbol{y}_t = \boldsymbol{y},
    \end{cases}
\end{equation}
where $\boldsymbol{W} = \left(W^1,...,W^m\right)^\intercal$ is an $m-$dimensional Brownian motion, whereas $\boldsymbol{\alpha} : \mathbb{R}^d \rightarrow \mathbb{R}^d$ and $\boldsymbol{\beta} : \mathbb{R}^d \rightarrow \mathbb{R}^{d\times m}$ are two deterministic functions. In case we want to emphasize the dependence on $\epsilon$ in \eqref{eq:EvolutionOfFactorEps}, we write $\left\{ \boldsymbol{y}_u \right\} \equiv \boldsymbol{y}^\epsilon \equiv \left\{ \boldsymbol{y}^\epsilon_u \right\}_{t\leqslant u \leqslant T}.$ Let us observe that 
$$
\boldsymbol{y}^\epsilon \stackrel{d}{=} \left\{ \boldsymbol{y}^1_{u/\epsilon} \right\}_{ 0\leqslant u \leqslant T }.
$$
However, in any circumstance where we refer to the process $\left\{ \boldsymbol{y}_u \right\},$ with no superscript, we mean $\left\{ \boldsymbol{y}^\epsilon_u \right\}_{t \leqslant u \leqslant T},$ where $t$ shall be clear from the context.

The trader's wealth $w^\nu_u$ at time $u$ consists of her current cash holdings $X^\nu_u$ plus the book value of her current inventory $Q^\nu_uS_u,$ i.e., $w^\nu_u := X^\nu_u + Q^\nu_uS_u.$ Thus, it is straightforward to derive that
$$
w^\nu_T = w^\nu_t - \int_t^T \kappa_u |\nu_u|^{1+\phi} \,du + \int_t^T \sigma_u Q^\nu_u\,dB_u.
$$

Henceforth, we rely on the following assumptions:
\begin{itemize}
    \item[(\textbf{H1})] The functions $\boldsymbol{\alpha}$ and $\boldsymbol{\beta}$ are Lipschitz continuous.
    \item[(\textbf{H2})] Both $\kappa$ and $\sigma$ are continuous functions and there are $\underline{\kappa},\overline{\kappa},\overline{\sigma}>0,$ $\underline{\sigma} \geqslant 0,$ such that $\overline{\kappa} \geqslant \kappa \geqslant \underline{\kappa}$ and $\overline{\sigma} \geqslant \sigma \geqslant \underline{\sigma}.$ Moreover, the exponent $\phi$ of the power-law assumption belongs to $\left]0,1\right],$ and the parameter $\epsilon$ is positive.
\end{itemize}

\subsection{Performance criteria and the value function}

The performance criteria of the trader consist of the difference between her terminal and initial wealth, along with some penalizations for holding inventory. More precisely,
\begin{align*}
    J^\nu\left( t,\, q,\, \boldsymbol{y} \right) &:= \mathbb{E}_{t,s,q,x,\boldsymbol{y}}\left[ X^\nu_T + Q^\nu_T S^\nu_T - \left(x+qs\right) - \gamma \int_t^T \sigma_u^{1+\phi}\left|Q^\nu_u\right|^{1+\phi}\,du - A\left|Q^\nu_T\right|^{1+\phi} \right] \\
    &= -\mathbb{E}_{t,q,\boldsymbol{y}}\left[ \int_t^T \left\{ \kappa_u\left| \nu_u \right|^{1+\phi} + \gamma \sigma_u^{1+\phi}\left|Q_u^\nu\right|^{1+\phi} \right\}\,du + A \left|Q_u^\nu\right|^{1+\phi} \right].
\end{align*}
The corresponding value function $J,$
\begin{equation} \label{eq:OptControlProb}
    J := \sup_{\nu \in \mathcal{U}_t} J^\nu,
\end{equation}
is a viscosity solution of the HJB
$$
\partial_t J + \frac{1}{\epsilon}\mathcal{L}J + \kappa H\left( \partial_q J / \kappa \right) - \gamma \sigma^{1+\phi} |q|^{1+\phi}= 0,
$$
with terminal condition $J|_{t=T} = - A |q|^{1+\phi},$ see \cite{souza2021regularized}, where the operator $\mathcal{L}$ is the infinitesimal generator of $\boldsymbol{y}$ when $\epsilon = 1,$ i.e.,
$$
\mathcal{L} = \frac{1}{2}\tr\left( \boldsymbol{\beta}\left(\boldsymbol{y}\right) \boldsymbol{\beta}\left(\boldsymbol{y}\right)^\intercal D_{\boldsymbol{y}}^2 \right) + \boldsymbol{\alpha}\left(\boldsymbol{y}\right) \cdot D_{\boldsymbol{y}}, 
$$
and $H(p) := \phi \left[ |p|/(1+\phi) \right]^{1+1/\phi}.$ We define the domain $\mathcal{D}\left(\mathcal{L} \right)$ of $\mathcal{L}$ as\footnote{We write $C\left(\mathbb{R}^d\right)$ to denote the space of functions $g : \mathbb{R}^d \rightarrow \mathbb{R}$ which are continuous.} 
$$
\mathcal{D}\left(\mathcal{L} \right) := \left\{ g \in C\left( \mathbb{R}^d \right) : \text{ the limit } \lim_{t \downarrow 0}\frac{1}{t}\left( \mathbb{E}\left[ g(\boldsymbol{y}^1_t) | \boldsymbol{y}^1_0 = \boldsymbol{y}\right] - g(\boldsymbol{y}) \right) \text{ exists uniformly in } \boldsymbol{y} \in \mathbb{R}^d \right\}.
$$ 
We envisage proceeding in a suitable ergodic framework --- specifically, the one we find described in \cite[Subsection 3.2]{fouque2011multiscale}. Thus, we fix the further hypotheses:
\begin{itemize}
    \item[(\textbf{H3})] The operator $\mathcal{L}$ has a discrete spectrum with a positive gap, i.e., zero is an isolated eigenvalue. We also suppose that the remaining eigenvalues $\left\{a_k\right\}$ of $\mathcal{L}$ satisfy $0 > a_1 > a_2 > \ldots,$ and that they are all simple.
    \item[(\textbf{H4})] The process $\boldsymbol{y}^1$ with infinitesimal generator $\mathcal{L}$ has a unique invariant distribution $\Pi.$ Moreover, we assume that $\boldsymbol{y}^1$ has moments of all orders, bounded uniformly in time. 
    \item[(\textbf{H5})] For each continuous at most polynomially growing $f \in L^1(\Pi)$ that is centered, i.e., $\left\langle f \right\rangle = 0,$ the Poisson equation $\mathcal{L}v = f$ admits at most polynomially growing solutions $v \in \mathcal{D}\left( \mathcal{L} \right) \cap L^1(\Pi).$
\end{itemize}

\begin{remark} \label{rem:expression_eigen}
The normalized eigenfunctions $\left\{ \psi_k \right\}$ of $\mathcal{L}$ are those that satisfy $\mathcal{L}\psi_k = a_k \psi_k$ and $\left\langle \psi_k^2 \right\rangle = 1.$ They form a basis of $L^2(\Pi),$ and for our operator $\mathcal{L}$ we have $\psi_0 \equiv 1.$ Moreover, we can express every $g \in L^2(\Pi)$ as
$$
g(\boldsymbol{y}) = \sum_{k} c_k\psi_k(\boldsymbol{y}),
$$
where $c_k = \left\langle g\psi_k \right\rangle,$ see \cite[Eq. (3.10)]{fouque2011multiscale}.
\end{remark}
\begin{remark} \label{rem:expectation_centered_bdd_eps}
Regarding (\textbf{H3}), we notice as in \cite[Eq. (3.11)]{fouque2011multiscale} that, whenever $g \in \mathcal{D}\left(\mathcal{L}\right)$ with $\left\langle g \right\rangle = 0,$ 
\begin{equation} \label{eq:AsymptoticAvg}
    \mathbb{E}_{t,\boldsymbol{y}}\left[ g\left(\boldsymbol{y}^\epsilon_T \right)\right] = \left\langle g \right\rangle + O(e^{-\frac{a}{\epsilon}(T-t)}) = O\left(\frac{\epsilon}{T-t}\right),
\end{equation}
uniformly in $0\leqslant t \leqslant T$ and $\epsilon>0$,\footnote{Uniformly here meaning that the constant the big-O implies is independent of the time variable $t$ and parameter $\epsilon$ within this range. We understand the estimate on the boundaries in the pointwise limit sense.}  where $a = |a_1| > 0$ is the spectral gap of $\mathcal{L}.$ 
\end{remark}
\begin{remark}
The unique invariant distribution $\Pi$ of $\boldsymbol{y}^1,$ whose existence we have postulated in (\textbf{H4}), is characterized as the solution to the PDE
$$
\begin{cases}
\mathcal{L}^* \Pi = 0 \text{ in } \mathbb{R}^d, \\ 
\int_{\boldsymbol{y} \in \mathbb{R}^d} \Pi(dy) = 1,
\end{cases}
$$
where
$$
\mathcal{L}^* \Pi = \frac{1}{2}\tr\left[D^2 \left( \beta \beta^\intercal \Pi \right) \right] - D\left( \Pi \alpha \right),
$$
see \cite[Eq. (3.7)]{fouque2011multiscale}.
\end{remark}

\begin{example} \label{ex:MultiD_OU}
The multi-dimensional Ornstein-Uhlenbeck (OU) process
$$
d\boldsymbol{y}^1_t = \boldsymbol{\Lambda}\left( \boldsymbol{m} - \boldsymbol{y}^1_t \right) + \boldsymbol{\eta} d\boldsymbol{W}_t,
$$
where $\boldsymbol{\Lambda} \in \mathbb{R}^{d\times d}$ is diagonal, with positive entries, $\boldsymbol{m} \in \mathbb{R}^d,$ and the matrix $\boldsymbol{\eta} \in \mathbb{R}^{d\times d}$ is invertible, is such that all hypotheses (\textbf{H3})-(\textbf{H5}) are valid. In this case,
$$
\Pi(\boldsymbol{y}) = (2\pi)^{-d/2}\left(\det \boldsymbol{A} \right)^{-1/2}\exp\left(-\frac{1}{2}\left( \boldsymbol{y} - \boldsymbol{m} \right)^\intercal \boldsymbol{A}^{-1}\left( \boldsymbol{y} - \boldsymbol{m} \right) \right), 
$$
where $\boldsymbol{A}$ solves
$$
\boldsymbol{\Lambda} \boldsymbol{A} + \boldsymbol{A} \boldsymbol{\Lambda} = \boldsymbol{\eta}\boldsymbol{\eta}^\intercal.
$$
\end{example}

Now, going back to our problem \eqref{eq:OptControlProb}, let us observe that the optimal control in feedback form is
\begin{equation} \label{eq:FeedbackCharactOptmStrat1}
    \nu^*(t,q,\boldsymbol{y}) := \sgn\left( \partial_q J(t,q,\boldsymbol{y}) \right) \left( \frac{|\partial_q J(t,q,\boldsymbol{y})|}{(1+\phi)\kappa(\boldsymbol{y})} \right)^{\frac{1}{\phi}},
\end{equation}
see \cite{souza2021regularized}. As this reference shows, the ansatz $J(t,q,\boldsymbol{y}) = z(t,\boldsymbol{y})|q|^{1+\phi}$ yields for $z$ the PDE
\begin{equation} \label{eq:MainPDE}
  \begin{cases}    
    \partial_t z + \frac{1}{\epsilon}\mathcal{L}z + \phi\kappa^{-1/\phi}|z|^{1+1/\phi} - \gamma \sigma^{1+\phi} = 0,\\
    z|_{t=T} = -A.
  \end{cases}
\end{equation}

\subsection{Some previous results}

The work \cite{souza2021regularized} contains the proof of the following results concerning the solution $z$ of the PDE \eqref{eq:MainPDE}.
\begin{theorem} \label{thm:ProptsSolnMainPDE}
(a) There exists a unique continuous and bounded viscosity solution $z$ of \eqref{eq:MainPDE}. 

(b) The function $z$ satisfies
$$
\frac{1}{C(T-t+A^{-1/\phi})^{\phi}} \leqslant -z(t,\boldsymbol{y}) \leqslant \frac{C}{(T-t+A^{-1/\phi})^{\phi}} \hspace{1.0cm} \left( (t,\boldsymbol{y}) \in \left[0,T\right]\times\mathbb{R}^d \right),
$$
where the positive constant $C$ is independent of $A.$

(c) The value function $J$ satisfies $J(t,\,q,\,\boldsymbol{y}) = z(t,\boldsymbol{y})|q|^{1+\phi},$ for each $(t,q,\boldsymbol{y}) \in \left[0,T\right]\times \mathbb{R} \times \mathbb{R}^d.$
\end{theorem}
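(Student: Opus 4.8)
The plan is to build the entire argument around a pair of explicit, spatially homogeneous barriers, from which the estimate in (b), the comparison principle behind (a), and the admissibility needed for (c) all flow. Writing $u := -z$, equation~\eqref{eq:MainPDE} becomes the backward semilinear problem $\partial_t u + \tfrac{1}{\epsilon}\mathcal{L}u = \phi\kappa^{-1/\phi}u^{1+1/\phi} - \gamma\sigma^{1+\phi}$ with $u|_{t=T}=A$, in which the nonlinearity is a superlinear absorption term of exponent $p=1+1/\phi$ as one integrates backward in the time-to-go $\tau := T-t$. The decisive observation is that the associated scalar ODE $\dot{\tilde u}(\tau)=\gamma\sigma^{1+\phi}-\phi\kappa^{-1/\phi}\tilde u^{1+1/\phi}$ with $\tilde u(0)=A$ has, when the source is dropped, the exact self-similar solution $\tilde u(\tau)=c\,(\tau+A^{-1/\phi})^{-\phi}$, precisely because $1/(p-1)=\phi$. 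Freezing $\kappa$ and $\sigma$ at the extreme values furnished by (\textbf{H2}) and recalling that $\mathcal{L}$ annihilates constants, I would promote these profiles to bounded classical (hence a fortiori viscosity) sub- and supersolutions $\underline{z}(t)=-\overline{c}\,(T-t+A^{-1/\phi})^{-\phi}$ and $\overline{z}(t)=-\underline{c}\,(T-t+A^{-1/\phi})^{-\phi}$ of~\eqref{eq:MainPDE}, with $\underline{c}\leqslant\overline{c}$, choosing the constants so that the frozen source $\gamma\overline{\sigma}^{1+\phi}$ is dominated over the bounded interval $[0,T]$.

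For part (a), the key point is that $z\mapsto\phi\kappa(\boldsymbol{y})^{-1/\phi}|z|^{1+1/\phi}$ is locally Lipschitz in $z$, uniformly in $\boldsymbol{y}$ thanks to $\kappa\geqslant\underline{\kappa}>0$ from (\textbf{H2}). I would therefore prove a comparison principle for \emph{bounded} sub- and supersolutions: in the doubling-of-variables argument the nonlinearity is only ever evaluated within the fixed $L^\infty$ range of the two competitors, where local Lipschitz continuity is all the standard Crandall--Ishii machinery needs (the second-order operator $\tfrac{1}{\epsilon}\mathcal{L}$ is possibly degenerate but has Lipschitz coefficients by (\textbf{H1}), so the structure condition holds). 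This yields uniqueness among bounded solutions, and no circularity arises since the argument only uses that the competitors are bounded. Existence of a continuous bounded viscosity solution then follows from Perron's method, using $\underline{z}\leqslant\overline{z}$ as ordered barriers, with boundedness inherited from them.

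Part (b) is then essentially already established: comparison against $\underline{z}$ and $\overline{z}$ squeezes $-z$ between $\underline{c}\,(T-t+A^{-1/\phi})^{-\phi}$ and $\overline{c}\,(T-t+A^{-1/\phi})^{-\phi}$, which is the claimed two-sided estimate. I expect this to be the main obstacle, because the delicate requirement is that $C$ be independent of $A$. This forces a careful check that $\underline{c},\overline{c}$ can be taken free of $A$: near $\tau=0$ the self-similar scaling makes the barrier profile match the true decay with universal constants, while for $\tau$ bounded away from $0$ the solution has relaxed toward the $A$-independent steady level $\bigl(\gamma\sigma^{1+\phi}/(\phi\kappa^{-1/\phi})\bigr)^{\phi/(\phi+1)}$; since $\tau$ ranges only over the bounded set $[0,T]$, both regimes can be absorbed into a single $A$-free constant, and one must verify that the source term does not spoil this matching.

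Finally, for (c) I would substitute the ansatz $J(t,q,\boldsymbol{y})=z(t,\boldsymbol{y})|q|^{1+\phi}$ into the HJB: using the explicit form of $H$, the factor $|q|^{1+\phi}$ cancels and the remaining bracket is exactly the left-hand side of~\eqref{eq:MainPDE}, so the ansatz solves the HJB precisely when $z$ solves~\eqref{eq:MainPDE}. It then remains to identify this candidate with the value function~\eqref{eq:OptControlProb} through a verification argument. The two-sided bounds from (b)---in particular that $-z$ is bounded away from $0$ and from $\infty$ uniformly on $[0,T]$---guarantee that the feedback control~\eqref{eq:FeedbackCharactOptmStrat1} belongs to $\mathcal{U}_t$ and that the stochastic integral $\int\sigma Q^\nu\,dB$ produced by Itô's formula is a genuine martingale, whence $J\geqslant J^\nu$ for every admissible $\nu$, with equality at the candidate optimizer. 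The only residual difficulty here is in these integrability and transversality checks, which again rest entirely on the estimates from (b).
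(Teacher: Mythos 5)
The paper does not actually prove Theorem \ref{thm:ProptsSolnMainPDE}: it is imported verbatim from the cited reference \cite{souza2021regularized}, so there is no in-paper argument to compare yours against. Judged on its own terms, your plan --- spatially homogeneous self-similar barriers of the form $-c\,(T-t+A^{-1/\phi})^{-\phi}$, a comparison principle for bounded viscosity solutions, Perron's method for existence, and a verification argument for (c) --- is the natural route and is essentially sound. In particular the algebra in (c) checks out: with $J=z|q|^{1+\phi}$ one gets $\kappa H(\partial_q J/\kappa)=\phi\kappa^{-1/\phi}|z|^{1+1/\phi}|q|^{1+\phi}$, so $|q|^{1+\phi}$ factors out and \eqref{eq:MainPDE} is exactly what remains.

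Two points deserve more than the passing mention you give them. First, the comparison principle must be run on all of $\mathbb{R}^d$, where the drift $\boldsymbol{\alpha}$ is only Lipschitz and hence of linear growth; the doubling-of-variables argument therefore needs a penalization at infinity in addition to the $e^{\lambda t}$ trick that restores properness of the zeroth-order dependence on $z$ --- standard, but it should be said, since it is the only place where (\textbf{H1}) is really used. Second, and more substantively, you rightly single out the $A$-independence of the constant in (b) as the delicate point, but the resolution is not symmetric between the two bounds. The lower bound on $-z$ (supersolution $-\underline{c}\,(T-t+A^{-1/\phi})^{-\phi}$ with $\underline{c}\leqslant\min(\underline{\kappa},1)$) is uniform in $A$ because the source $-\gamma\sigma^{1+\phi}$ has the favourable sign there. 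For the upper bound with $\gamma>0$, the clean barrier is $-\left(\overline{c}\,(T-t+A^{-1/\phi})^{-\phi}+K\right)$ with $K$ dominating the steady state, and folding $K$ back into the form $C\,(T-t+A^{-1/\phi})^{-\phi}$ costs a factor $(T+A^{-1/\phi})^{\phi}$, which is bounded independently of $A$ only when $A$ is bounded away from zero. This is intrinsic to the statement rather than a defect of your construction: for $\gamma>0$ and $A\downarrow 0$ the claimed upper bound forces $-z\to 0$ pointwise while the true solution relaxes to a nonzero steady level. With that caveat recorded, your outline would deliver the theorem.
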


As a consequence of Theorem \ref{thm:ProptsSolnMainPDE} (c) and \eqref{eq:FeedbackCharactOptmStrat1}, we obtain the following characterization of the optimal strategy in terms of $z.$
\begin{corollary}
The optimal control $\nu^*$ in feedback form is given by
\begin{equation} \label{eq:OptContFeedbackWithAnsatz}
    \nu^*(t,q,\boldsymbol{y}) := - \left( - \frac{z(t,\boldsymbol{y})}{ \kappa(\boldsymbol{y})} \right)^{\frac{1}{\phi}}q.
\end{equation}
\end{corollary}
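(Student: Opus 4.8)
The plan is to derive the Corollary by substituting the ansatz of Theorem \ref{thm:ProptsSolnMainPDE}(c) directly into the feedback characterization \eqref{eq:FeedbackCharactOptmStrat1}. The entire argument is a computation, so I would organize it around carefully tracking the signs coming from $q$ and from $z$. First I would differentiate $J(t,q,\boldsymbol{y}) = z(t,\boldsymbol{y})|q|^{1+\phi}$ in $q$, obtaining, for $q \neq 0$,
$$
\partial_q J(t,q,\boldsymbol{y}) = (1+\phi)\,z(t,\boldsymbol{y})\,|q|^{\phi}\sgn(q).
$$

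The one non-routine input is the sign of $z$: by Theorem \ref{thm:ProptsSolnMainPDE}(b) we have $-z(t,\boldsymbol{y}) > 0$, so $z$ is strictly negative on all of $[0,T]\times\mathbb{R}^d$. Using this, I would read off $\sgn(\partial_q J) = -\sgn(q)$ and $|\partial_q J| = -z(t,\boldsymbol{y})(1+\phi)|q|^{\phi}$ (since $|z| = -z$), and then insert both into \eqref{eq:FeedbackCharactOptmStrat1}:
$$
\nu^*(t,q,\boldsymbol{y}) = -\sgn(q)\left( \frac{-z(t,\boldsymbol{y})(1+\phi)|q|^{\phi}}{(1+\phi)\kappa(\boldsymbol{y})} \right)^{1/\phi} = -\sgn(q)\left( \frac{-z(t,\boldsymbol{y})}{\kappa(\boldsymbol{y})} \right)^{1/\phi}|q|,
$$
where the factor $(1+\phi)$ cancels and $\left(|q|^{\phi}\right)^{1/\phi} = |q|$. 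Finally, the identity $\sgn(q)\,|q| = q$ produces the claimed expression, and the degenerate case $q = 0$ is handled separately, since both \eqref{eq:FeedbackCharactOptmStrat1} and the target formula vanish there; hence the identity holds for every $q \in \mathbb{R}$.

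I do not expect a genuine obstacle here: the difficulty is purely bookkeeping. The only point requiring care is the negativity of $z$ supplied by Theorem \ref{thm:ProptsSolnMainPDE}(b), which is what guarantees $-z/\kappa > 0$ so that the real power $(\cdot)^{1/\phi}$ is well defined, and which forces the control to point in the $-q$ direction, consistent with a liquidating, inventory-reducing strategy.
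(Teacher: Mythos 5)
Your computation is exactly the substitution the paper intends: the corollary is stated as an immediate consequence of Theorem \ref{thm:ProptsSolnMainPDE}(c) and \eqref{eq:FeedbackCharactOptmStrat1}, and your bookkeeping of the signs via $z<0$ from Theorem \ref{thm:ProptsSolnMainPDE}(b) correctly fills in the (omitted) details. The proposal is correct and follows the same route as the paper.
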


\subsection{Expanding in a power series in the fast mean-reversion parameter}

We formally expand
\begin{equation} \label{Expansion}
    z = \sum_{n=0}^\infty \epsilon^n z_n.
\end{equation}
Our aim is to find the zeroth and first-order terms in this expansion. In this direction, let us write
$$
\mathcal{L}_1 \left( v \right) := \partial_t v + \phi \kappa^{-1/\phi}\left|v\right|^{1+1/\phi} - \gamma \sigma^{1+\phi}.
$$ 
It follows that
\begin{equation} \label{FirstEqn}
    \mathcal{L}z_0 = 0,
\end{equation}
\begin{equation} \label{SecondEqn}
    \mathcal{L} z_1 + \mathcal{L}_1(z_0) = 0,
\end{equation}
\begin{equation} \label{ThirdEqn}
    \mathcal{L} z_2 + \mathcal{L}_1^\prime(z_0)\cdot z_1 = 0,
\end{equation}
and the terminal conditions ought to be $z_0|_{t=T} = -A $ and $z_1|_{t=T} = 0 = z_2|_{t=T}.$ In \eqref{ThirdEqn}, we have written
$$
\mathcal{L}_1^\prime(z_0)\cdot z_1 := \partial_t z_1 + \left(1+\phi\right)\left( \frac{\left|z_0\right|}{\kappa}\right)^{1/\phi} \sgn\left(z_0\right) z_1.
$$

\section{Leading-order approximation} \label{sec:OrderZero}

\subsection{Derivation of the leading-order approximation}

From \eqref{FirstEqn}, we derive $z_0 = z_0(t).$ Then, we obtain from \eqref{SecondEqn}, together with the corresponding terminal condition, that
\begin{equation} \label{eq:ODE_OrdZeroApprox}
    \begin{cases}
        0 = \left\langle \mathcal{L}_1 \left(z_0\right) \right\rangle = \partial_t z_0 + \phi \left\langle\kappa^{-1/\phi}\right\rangle\left|z_0\right|^{1+1/\phi} - \gamma \left\langle\sigma^{1+\phi}\right\rangle, \\
        z_0(T) = -A  ,
    \end{cases}
\end{equation}
whence we have the representation
\begin{equation} \label{eq:leadingOrdApprox}
z_0(t) = F^{-1}\left(T-t\right),\,0\leqslant t \leqslant T,
\end{equation}
where $F : \left[-A, -\left(\frac{\gamma\left\langle \sigma^{1+\phi}\right\rangle}{\phi\left\langle \kappa^{-1/\phi} \right\rangle} \right)^{\frac{\phi}{\phi+1}} \right[ \rightarrow \left[0,\infty \right[ $ is defined as
$$
F\left( \xi \right) := -\int_{-A}^{\xi} \frac{du}{\gamma \left\langle\sigma^{1+\phi}\right\rangle - \phi \left\langle\kappa^{-1/\phi}\right\rangle|u|^{1+1/\phi}}.
$$

\begin{definition} \label{def:Order0}
Our leading-order approximation $\overline{z}_0$ is given by
$$ 
\overline{z}_0 := z_0,
$$
where we described $z_0$ in \eqref{eq:leadingOrdApprox}.
\end{definition}

In the case $\phi = 1,$ we have the closed-form expression
\begin{equation} \label{eq:leadingOrdApprox_phiEqualToOne}
    z_0(t) = -\zeta\left( \frac{e^{2\zeta (T-t)} + \zeta^\prime}{e^{2\zeta (T-t)} - \zeta^\prime} \right),
\end{equation}
for the parameters
\begin{equation} \label{eq:Param1LeadingOrder_phiEqualToOne}
    \zeta := \sqrt{ \gamma \left\langle \sigma^2\right\rangle \left\langle \frac{1}{\kappa} \right\rangle }
\end{equation}
and 
\begin{equation} \label{eq:Param2leadingOrd_phiEqualToOne}
    \zeta^\prime = \frac{A + \sqrt{\gamma \frac{\left\langle \sigma^2\right\rangle}{ \left\langle \frac{1}{\kappa} \right\rangle}}}{A - \sqrt{\gamma\frac{ \left\langle \sigma^2\right\rangle}{ \left\langle \frac{1}{\kappa} \right\rangle} }}.
\end{equation}

\subsection{A first set of numerical experiments} \label{subsec:Numerics1}

In view of \eqref{eq:OptContFeedbackWithAnsatz}, our leading-order approximation $z_0 = z_0^\gamma$ of $z$ (cf. \eqref{eq:ODE_OrdZeroApprox} or \eqref{eq:leadingOrdApprox}), corresponding to a risk aversion parameter $\gamma,$ suggests us to use the rate of trading given in feedback form by
\begin{equation} \label{eq:strat_leading_order}
    \nu^{0,\gamma}(t,q,\boldsymbol{y}) = - \left( - \frac{z_0^\gamma(t)}{ \kappa(\boldsymbol{y})} \right)^{\frac{1}{\phi}}q.
\end{equation}
From here on, we will denote the inventory and cash processes corresponding to the strategy $\nu^{0,\gamma}$ by $Q^{0,\gamma}$ and $X^{0,\gamma},$ respectively. In particular, in the risk-neutral setting, we put
$$
Q^0 := Q^{0,\gamma = 0} \text{ and } X^0 := X^{0,\gamma = 0}.
$$
For any given benchmark strategy $\nu^b,$ we refer to the quantity (in basis points)
$$
\frac{X^{0,\gamma}_T + Q^{0,\gamma}_T\left(S_T - A |Q^{0,\gamma}_T|^\phi\sgn(Q^{0,\gamma}_T)\right) - \left[ X^b_T +Q^{b}_T\left(S_T - A |Q^{b}_T|^\phi\sgn(Q^{b}_T)\right)\right]}{X^b_T+Q^{b}_T\left(S_T - A |Q^{b}_T|^\phi\sgn(Q^{b}_T)\right)}\times 10^4,
$$
as the performance of $\nu^{0,\gamma}$ relative to $\nu^b$ (or simply the relative performance, when there is no ambiguity about the benchmark), where $X^b_T$ is the terminal cash we obtain from following strategy $\nu^b.$

Regarding the model dynamics, we take a two-dimensional OU process $\boldsymbol{y} = \left\{ \boldsymbol{y}_t = (y^{(1)}_{t},y^{(2)}_{t}) \right\}_t,$ 
$$
\begin{cases}
    B^1 := W^1,\, B^2 = \rho W^1 + \sqrt{1-\rho^2}W^2,\\
    \text{and } dy^{(i)}_{t} = \frac{\lambda_i}{\epsilon}\left( m_i - y^{(i)}_{t}\right)\,dt + \frac{\eta_i}{\sqrt{\epsilon}} dB^i_t \hspace{1.0cm} \left(i\in \left\{1,2\right\}\right),
\end{cases}
$$
for a two-dimensional Brownian motion $(W^1,\,W^2).$ Moreover, we assume that each factor models each one of the processes $\kappa$ and $\sigma:$
$$
\kappa_t = \kappa(y^{(1)}_t) \text{ and } \sigma_t = \sigma(y^{(2)}_t).
$$
Above, we take\footnote{We write $x\vee y := \max\left(x,\,y\right)$ and $x \wedge y := \min\left(x,\, y\right).$}
$$
\kappa(y) = \underline{\kappa} \vee \left( y \wedge \overline{\kappa} \right) \text{ and } \sigma(y) = \exp\left( \underline{\sigma} \vee \left( y \wedge \overline{\sigma} \right) \right),
$$
for the parameters $\underline{\kappa},\overline{\kappa},\underline{\sigma},$ and $\overline{\sigma}$ we present in Table \ref{tab:bounds_kappa_sigma}. 

\begin{table}[]
\centering
\begin{tabular}{@{}cccc@{}}
\toprule
$\underline{\kappa}$ & $\overline{\kappa}$ & $\underline{\sigma}$ & $\overline{\sigma}$ \\ \midrule
$0.01$               & $1.1$               & $2$                  & $7$                 \\ \bottomrule
\end{tabular}
\caption{Bounds for stochastic liquidity and volatility.}
\label{tab:bounds_kappa_sigma}
\end{table}

We introduce the matrices
$$
\boldsymbol{\Lambda} = \begin{bmatrix}
\lambda_1 & 0\\
0 & \lambda_2
\end{bmatrix} \text{ and } \boldsymbol{\eta} = \begin{bmatrix}
\eta_1 & 0 \\
\rho \eta_2 & \sqrt{1-\rho^2}\eta_2
\end{bmatrix},
$$
implying 
$$
\boldsymbol{\eta}\boldsymbol{\eta}^\intercal = \begin{bmatrix}
\eta_1^2 & \rho\eta_1\eta_2 \\
\rho\eta_1\eta_2 & \eta_2^2
\end{bmatrix}.
$$
As a particular case of our Example \ref{ex:MultiD_OU}, let us recall that we have the closed-form expression
\begin{equation} \label{eq:Pi2DOUcase}
    \Pi(\boldsymbol{y}) = (2\pi)^{-1}\left(\det \boldsymbol{A} \right)^{-1/2}\exp\left(-\frac{1}{2}\left( \boldsymbol{y} - \boldsymbol{m} \right)^\intercal \boldsymbol{A}^{-1}\left( \boldsymbol{y} - \boldsymbol{m} \right) \right), 
\end{equation}
with $\boldsymbol{A}$ being the solution to the matrix equation $\boldsymbol{\Lambda} \boldsymbol{A} + \boldsymbol{A} \boldsymbol{\Lambda} = \boldsymbol{\eta}\boldsymbol{\eta}^\intercal,$ which here is given explicitly by
\begin{equation} \label{eq:MatrixA}
    \boldsymbol{A} = \begin{bmatrix}
        \frac{\eta_1^2}{2\lambda_1} & \frac{\rho \eta_1 \eta_2}{\lambda_1 + \lambda_2} \\
        \frac{\rho \eta_1 \eta_2}{\lambda_1 + \lambda_2} & \frac{\eta_2^2}{2\lambda_2}
    \end{bmatrix}.
\end{equation}

In Section \ref{subsec:StochVolAndLiq}, we obtained our estimates $\widehat{\phi},\,\widehat{\lambda}_i,\,\widehat{m}_i,\, \widehat{\eta}_i,$ and $\widehat{\rho}$ of the parameters $\phi,\, \lambda_i^\epsilon := \lambda_i/\epsilon,\, m_i,\, \eta_i^\epsilon := \eta_i/\sqrt{\epsilon}$ (for $i\in \left\{1,2\right\}),$ and $\rho,$ respectively, for BTCUSDT at December 19, 2022. We expose the simulation's parameters resulting from those developments in Tables \ref{tab:params_simulations} and \ref{tab:paramaters_sim_2}. We provide an example of sample paths for the temporary price impact coefficient, the log-volatility, and the corresponding price process in Figure \ref{fig:sample_paths_sim}.

\begin{table}[]
\centering
\begin{tabular}{@{}cccc@{}}
\toprule
$i$ & $\lambda_i$ & $m_i$    & $\eta_i$  \\ \midrule
$1$ & $1905.2180$ & $0.3782$ & $4.0134$  \\
$2$ & $1279.7954$ & $4.7810$ & $19.0326$ \\ \bottomrule
\end{tabular}
\caption{Parameters we use to model our two-dimensional Markov diffusion driving the stochastic liquidity and volatility.}
\label{tab:params_simulations}
\end{table}

\begin{table}[]
\centering
\begin{tabular}{@{}ccc@{}}
\toprule
$\phi$ & $\rho$ & $\epsilon$ \\ \midrule
0.2833 & 0.2096 & 0.0008     \\ \bottomrule
\end{tabular}
\caption{Remaining parameters we fix in our simulations, rounded to four decimal places.}
\label{tab:paramaters_sim_2}
\end{table}

\begin{figure}
    \centering
    \includegraphics[scale=0.4]{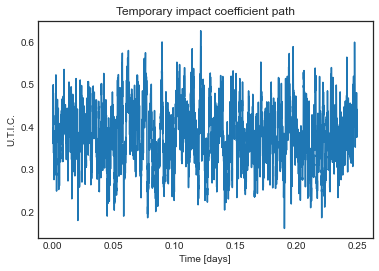}
    \includegraphics[scale=0.4]{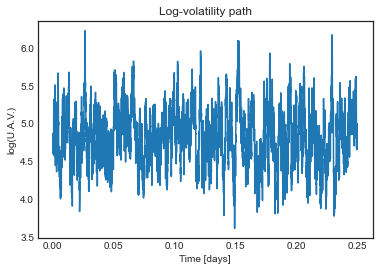}
    \includegraphics[scale=0.4]{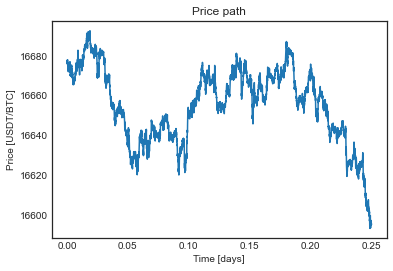}
    \caption{A sample path for the temporary impact coefficient, for the log-volatility, and for the corresponding price path.}
    \label{fig:sample_paths_sim}
\end{figure}

We will take $\epsilon = \min\left(\widehat{\lambda}_1,\,\widehat{\lambda}_2\right)^{-1},$ see Table \ref{tab:paramaters_sim_2}. We assess the performance of the strategies we described above by running $10^4$ Monte Carlo simulations for each set of parameters, using the same  price, volatility, and liquidity innovations across experiments for distinct strategies (the approximations and its benchmarks), with $T = 0.25$ day, and we provide our results in what follows. We consider a liquidation program with initial data given in Table \ref{tab:init_condts}. During the restricted time window from $8$ AM to $2$ PM for BTCUSDT on Binance at the day we analyzed it, the traded volume amounted to 33217.79 BTC. Thus, we are simulating an execution of roughly one third of that window's traded volume.

\begin{table}[]
\centering
\begin{tabular}{@{}ccccc@{}}
\toprule
$X_0\,{[}\${]}$ & $Q_0$ {[}BTC{]} & $S_0$ {[}$\$$/BTC{]} & $y^{(1)}_0$ {[}U.T.I.C{]} & $y^{(2)}_0$ {[}log(U.A.V.){]} \\ \midrule
$0$              & $10000$         & 16676                   & $0.3782$                  & $4.7810$                      \\ \bottomrule
\end{tabular}
\caption{Initial conditions for our simulations.}
\label{tab:init_condts}
\end{table}

We benchmark the performance of our leading-order approximation using the standard Almgren-Chriss $\nu^{AC}$ in which we assume $\kappa \equiv m_1$ and $\sigma \equiv e^{m_2}.$ Thus, in feedback form, $\nu^{AC}(t,q) = -\left( - z^{AC}(t)/m_1 \right)^{1/\phi} q,$ where $z^{AC}$ solves \eqref{eq:ODE_OrdZeroApprox} with $m_1$ and $e^{m_2}$ in place of $\kappa$ and $\sigma,$ respectively. Thus, as our benchmark, we consider the strategy a trader assuming constant liquidity and volatility (equal to their long-run mean values) should use if she were to optimize with our objective functional.

Firstly, we analyze the risk neutral setting, i.e., with $\gamma = 0.$ We present in Figure \ref{fig:risk_neutral_variables} a sample path of the state and control variables for both the approximation and the benchmark corresponding to the innovations of the paths we showed in Fig \ref{fig:sample_paths_sim}. Even in the risk neutral setting, the leading-order approximation has the advantage of being adaptative with respect to the stochastic liquidity, see \eqref{eq:leadingOrdApprox}. In Table \ref{tab:risk_neutral_perf}, we present some quantitative aspects to support that the approximation not only consistently outperforms the benchmark, but it also commonly ends up holding less inventory. Thus, we can conclude that it provides a considerable edge from the viewpoint of a trader willing to liquidate her sizeable portfolio. In Figure \ref{fig:histograms_risk_neutral}, we further illustrate the approximation's performance.

\begin{figure}
    \centering
    \includegraphics[scale=0.4]{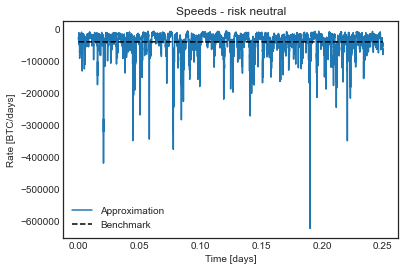}
    \includegraphics[scale=0.4]{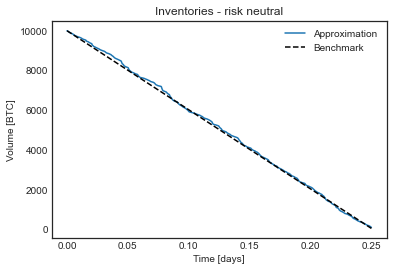}
    \includegraphics[scale=0.4]{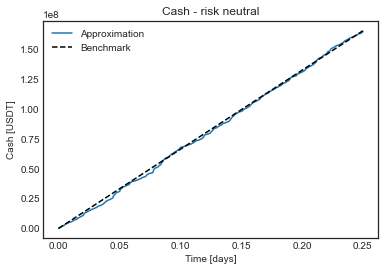}
    \caption{Time evolution of the control and state variables for the trader.}
    \label{fig:risk_neutral_variables}
\end{figure}

\begin{table}[]
\centering
\begin{tabular}{@{}cccccc@{}}
\toprule
Relative performance [bps] & Improvement rate & $\widehat{\mathbb{P}}(X^0_T > X^b_T)$ & $\widehat{\mathbb{P}}(Q^0_T < X^b_T)$ & $\frac{\widehat{\mathbb{E}}\left[X^0_T\right]}{Q_0S_0}$ & $\frac{\widehat{\mathbb{E}}\left[Q^0_T\right]}{Q_0}$ \\ \midrule
$0.3580$             & $68.43 \%$       & $0.6186$                    & $0.6136$                    & $99.49\%$                               & $0.47\%$                             \\
$(0.9593)$           &                  &                             &                             & $(0.37\%)$                              & $(0.29\%)$                           \\ \bottomrule
\end{tabular}
\caption{Some performance indicators for the risk neutral leading-order approximation. The benchmark is its AC counterpart using the long-term mean values of stochastic liquidity and volatility as their constant parameter values. We refer to the ``improvement rate'' as the number of simulations in which $\nu^0$ improves $\nu^b$ in terms of relative performance divided by the total number of simulations. The probability measure $\widehat{\mathbb{P}}$ above is the empirical one, and $\widehat{\mathbb{E}}$ is taken with respect to it. Quantities within parentheses are standard deviations corresponding to the averages immediately above them.}
\label{tab:risk_neutral_perf}
\end{table}

\begin{figure}
    \centering
    \includegraphics[scale=0.4]{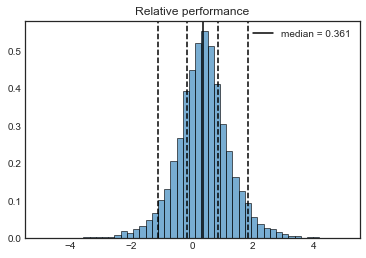}
    \includegraphics[scale=0.4]{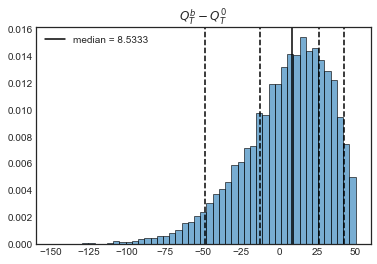}
    \caption{Histograms of relative performance and the inventory difference $Q^b-Q^0$ between those corresponding to the risk neutral ($\gamma=0$) versions of the benchmark $\nu^{AC}$ and leading-order approximation $\nu^0,$ respectively.}
    \label{fig:histograms_risk_neutral}
\end{figure}

Next, we move the risk averse case, i.e., $\gamma > 0.$ Our benchmark is still $\nu^{AC},$ with the same $\gamma$ as we regard in the approximation. Regarding its behavior, we showcase in Figure \ref{fig:risk_av_state_vars} its inventory and cash. As we would expect, it is more front loaded than the risk neutral one. Moreover, relative to the benchmark, by using higher values of risk aversion the trader still maintains a consistent average edge over the benchmark with the same risk aversion level, see Table \ref{tab:risk_averse_rel_perf}. However, as we can see in Figure \ref{fig:perf_risk_averse}, the performance relative to the benchmark becomes a bit more uncertain, as its variance is a bit larger. We point out that the risk averse agent is not directly adaptative with respect to the price volatility. The trader is only sensitive this process through a suitable effective mean, see \eqref{eq:ODE_OrdZeroApprox} and \eqref{eq:strat_leading_order}. We have illustrated how using the effective mean adds up some value relative to using $\sigma \equiv e^{m_2}.$

\begin{figure}
    \centering
    \includegraphics[scale=0.4]{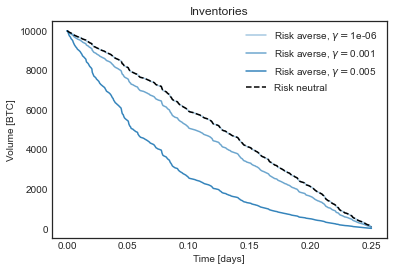}
    \includegraphics[scale=0.4]{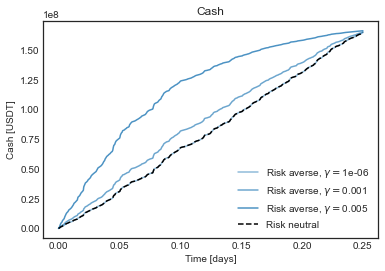}
    \caption{Some state variables illustrating the risk averse setting.}
    \label{fig:risk_av_state_vars}
\end{figure}

\begin{table}[]
\centering
\begin{tabular}{@{}cc@{}}
\toprule
$\gamma$ & Relative performance [bps] \\ \midrule
$0.000001$ & $0.3580$                     \\
         & $(0.9594)$                   \\ \midrule
$0.001$ & $0.3567$                     \\
         & $(1.0983)$                   \\ \midrule
$0.005$ & $0.2840$                     \\
         & $(1.6879)$                   \\ \bottomrule
\end{tabular}
\caption{Relative performance of the risk averse leading-order approximation with respect to the risk averse AC counterpart. Within parentheses are the standard deviations of the averages immediately above them.}
\label{tab:risk_averse_rel_perf}
\end{table}

\begin{figure}
    \centering
    \includegraphics[scale=0.4]{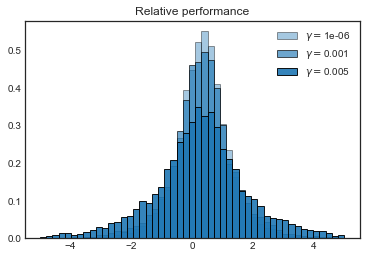}
    \caption{Relative performance of the risk averse leading-order approximation with respect to the risk averse AC counterpart.}
    \label{fig:perf_risk_averse}
\end{figure}

Finally, relative to the risk neutral setting, adding some risk aversion seems to be useful in any case, see Table \ref{tab:risk_av_v_risk_neutral}. In effect, the penalization of the running inventory does not always overcome the terminal wealth. Thus, adding at a small degree of urgency to her program (i.e., being weakly risk averse), leads to ending up with a larger terminal wealth on average than the risk neutral counterpart, but with inventory and price risk mitigated. Furthermore, larger values of risk aversion mitigate the variance of the terminal inventory, and makes its mean value be closer to zero, see Figure \ref{fig:inventories_risk_averse}.  

\begin{table}[]
\centering
\begin{tabular}{@{}cccc@{}}
\toprule
$\gamma$ & Relative performance [bps] & $\widehat{\mathbb{P}}\left(X^{0,\gamma}_T > X^{0,\gamma=0}_T\right)$ & $\widehat{\mathbb{P}}\left(Q^{0,\gamma}_T < Q^{0,\gamma=0}_T\right)$ \\ \midrule
$0.000001$ & $0.00002$                    & $0.9743$                                                     & $0.9996$                                                    \\
         & $(0.0027)$                   &                                                            &                                                            \\
$0.001$ & $-0.0126$                    & $0.9710$                                                     & $0.9996$                                                     \\
         & $(2.5755)$                   &                                                            &                                                            \\
$0.005$ & $-0.4178$                    & $0.9580$                                                     & $0.9996$                                                     \\
         & $(10.0674)$                  &                                                            &                                                            \\ \bottomrule
\end{tabular}
\caption{Risk averse performance relative to the risk neutral setting for the leading-order approximation. We denoted by $\widehat{\mathbb{P}}$ the empirical probability measure.}
\label{tab:risk_av_v_risk_neutral}
\end{table}

\begin{figure}
    \centering
    \includegraphics[scale=0.4]{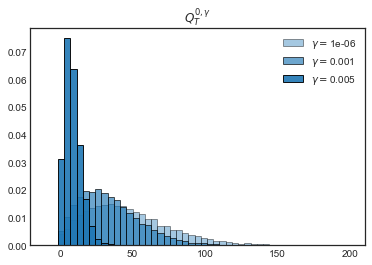}
    \caption{Inventory histograms for the risk averse leading-order approximations. The probability measure $\widehat{\mathbb{P}}$ above is the empirical one.}
    \label{fig:inventories_risk_averse}
\end{figure}

\section{First-order Correction} \label{sec:OrderOne}

\subsection{Derivation of the first-order correction}

From \eqref{SecondEqn}, we have
\begin{align} \label{eq:PDE_OrdOneApprox}
    \begin{split}
        \mathcal{L}z_1 &= -\mathcal{L}_1\left(z_0\right) \\
        &= -\left( \mathcal{L}_1 - \left\langle \mathcal{L}_1 \right\rangle \right) \left( z_0 \right) \\
        &= \phi\left|z_0\right|^{1+1/\phi}\left( \left\langle \kappa^{-1/\phi} \right\rangle - \kappa^{-1/\phi} \right) + \gamma \left(\sigma^{1+\phi} - \left\langle \sigma^{1+\phi} \right\rangle \right).
    \end{split}
\end{align}
Upon introducing the solutions $\varphi_0$ and $\varphi_1$ of the Poisson equations
\begin{equation} \label{eq:Phi0}
    \begin{cases}
        \mathcal{L}\varphi_0 = \left\langle \kappa^{-1/\phi} \right\rangle - \kappa^{-1/\phi}, \text{ in } \mathbb{R}^d, \\
        \left\langle \varphi_0 \right\rangle = 0,
    \end{cases}
\end{equation}
and
\begin{equation} \label{eq:Phi1}
    \begin{cases}
        \mathcal{L}\varphi_1 = \sigma^{1+\phi} - \left\langle \sigma^{1+\phi} \right\rangle , \text{ in } \mathbb{R}^d, \\
        \left\langle \varphi_1 \right\rangle = 0,
    \end{cases}
\end{equation}
we infer that
$$
\mathcal{L}\left( z_1 - \phi \left|z_0\right|^{1+1/\phi} \varphi_0 - \gamma \varphi_1 \right) = 0,
$$
whence
$$
z_1 =  \phi \left|z_0\right|^{1+1/\phi}\varphi_0 + \gamma \varphi_1 + c,
$$
where $c=c(t).$ We determine $c$ from \eqref{ThirdEqn}, since this equation implies that 
$$
\left\langle \mathcal{L}_1^\prime\left(z_0\right)\cdot z_1 \right\rangle = 0,
$$
in such a way that
\begin{equation} \label{eq:ODE_bdry_layer}
    c^\prime +  b_0 c + b_1  = 0,
\end{equation}
where
$$
b_0(t) := (1+\phi)\left| z_0 \right|^{1/\phi}\sgn\left( z_0 \right) \left\langle \kappa^{-1/\phi} \right\rangle = -(1+\phi)\left| z_0 \right|^{1/\phi} \left\langle \kappa^{-1/\phi} \right\rangle
$$
and
\begin{align*}
    b_1(t) :=&\, (1+\phi)|z_0|^{ 1/\phi} \left[ \phi z_0|z_0|^{ 1/\phi} \left\langle \frac{\varphi_0}{\kappa^{1/\phi}} \right\rangle +\gamma \sgn\left(z_0\right) \left\langle \frac{\varphi_1}{\kappa^{1/\phi}} \right\rangle \right] \\
    =&\, (1+\phi)|z_0|^{ 1/\phi} \left[\phi z_0|z_0|^{ 1/\phi} \left\langle \frac{\varphi_0}{\kappa^{1/\phi}} \right\rangle - \gamma \left\langle \frac{\varphi_1}{\kappa^{1/\phi}} \right\rangle \right].
\end{align*}
Moreover, we stipulate $\left\langle z_1\left(T,\cdot \right) \right\rangle =0,$ which implies $c(T) =0.$ Thus,
\begin{equation} \label{eq:BdryLayer}
    c(t) = \int_t^T e^{\int_t^u b_0(r)\,dr} b_1(u)\,du. 
\end{equation}
\begin{definition}
We define the first-order correction as
$$
\overline{z}_1 := \overline{z}_0 + \epsilon \left(\phi \left|z_0\right|^{1+1/\phi}\varphi_0 + \gamma\varphi_1 + c\right),
$$
where $\varphi_0$ and $\varphi_1$ are given by \eqref{eq:Phi0} and \eqref{eq:Phi1}, respectively, and we presented the boundary layer $c$ in \eqref{eq:BdryLayer}.
\end{definition}


\subsection{Some complementary numerical experiments} \label{subsec:Numerics2}

We will consider the same particular model as in Section \ref{subsec:Numerics1}. Our first-order correction $z_1 = z_1^\gamma$ (corresponding to a risk aversion parameter $\gamma$) leads us to consider the strategy $\nu^{1,\gamma}$ given in feedback form by
$$
\nu^{1,\gamma}(t,q,\boldsymbol{y}) = - \left( - \frac{\overline{z}_1^\gamma(t)}{ \kappa(\boldsymbol{y})} \right)^{\frac{1}{\phi}}q.
$$
We write $\nu^1 \equiv \nu^{1,\gamma}$ whenever there is no danger of confusion, and we denote the cash and inventory processes resulting from following this trading rate by $X^{1,\gamma}$ and $Q^{1,\gamma},$ respectively. However, we observe that we must address some numerical subtleties concerning the computation of $\varphi_0$ and $\varphi_1$ in \eqref{eq:Phi0} and \eqref{eq:Phi1}, respectively. We present a rigorous analysis of this issue in \ref{app:LeadingOrder}. In the present section, we resort to discuss the results we obtain by implementing this first-order correction term to the leading-order approximation. We run $10000$ Monte Carlo simulations, with the same price, volatility, and impact innovations as in Section \ref{subsec:Numerics1}.

We present in Table \ref{tab:rel_perf_ord1_to_ord0} the relative performance of $\nu^1$ with respect to $\nu^0.$ We see that $\nu^1$ does perform consistently better than $\nu^0,$ but also that that the edge is quite marginal. It is noteworthy that the relative performance of $\nu^1$ with respect to $\nu^0$ for moderately risk averse traders is of the order of $\epsilon,$ see Table \ref{tab:paramaters_sim_2}. We note that $\nu^1$ is adaptative with respect to the factor $y^{(2)},$ whence for high risk aversion levels it ends up not improving the terminal wealth of $\nu^0$, but rather concentrating in managing inventory risk throughout the execution program. Moreover, $\nu^1$ typically liquidates more inventory and terminates with more cash than $\nu^0,$ which further shows its enhanced effectiveness from a risk management viewpoint. 

\begin{table}[]
\centering
\begin{tabular}{@{}cc@{}}
\toprule
$\gamma$ & Relative performance [bps] \\ \midrule
$0$      & $0.000819$                   \\
         & $(0.072848)$                 \\ \midrule
$0.000001$ & $0.000818$                   \\
         & $(0.072818)$                 \\ \midrule
$0.001$    & $0.000336$                   \\
         & $(0.047897)$                 \\ \midrule
$0.005$    & $-0.000373$                  \\
         & $(0.009375)$                 \\ \bottomrule
\end{tabular}
\caption{Relative performance of the first-order correction $\nu^1$ relative to the leading-order couterpart $\nu^0$. }
\label{tab:rel_perf_ord1_to_ord0}
\end{table}

\begin{table}[]
\centering
\begin{tabular}{@{}ccc@{}}
\toprule
$\gamma$ & $\widehat{\mathbb{P}}\left( X^{1,\gamma}_T > X^{0,\gamma}_T \right)$ & $\widehat{\mathbb{P}}\left( Q^{1,\gamma}_T < Q^{0,\gamma}_T \right)$ \\ \midrule
$0$        & $0.9989$                                                      & $0.9996$                                                      \\
$0.000001$ & $0.9989$                                                      & $0.9996$                                                     \\
$0.001$    & $0.9985$                                                      & $0.9995$                                                      \\
$0.005$    & $0.9972$                                                      & $0.9994$                                                      \\ \bottomrule
\end{tabular}
\caption{Terminal cash and inventory comparison among the first-order correction and leading-order approximation. Above, $\widehat{\mathbb{P}}$ denotes the empirical probability measure.}
\label{tab:term_cash_and_inv_ord1_to_ord0}
\end{table}

For the innovations leading to the paths we showed in Figure \ref{fig:sample_paths_sim}, we present the difference of $\nu^1$ and $\nu^0$ in terms of $\nu^0$ itself for different risk aversion levels in Figure \ref{fig:speed_rel_diff_ord1_to_ord0}. As the first-order correction is more front-loaded than the leading-order one, it decelerates relative to $\nu^0$ towards the end, cf. Table \ref{tab:term_cash_and_inv_ord1_to_ord0}. Since we are working in a very fast mean reverting market, see Table \ref{tab:fast_mr_params}, the correction we make in the first-order approximation is not too sizeable. Since the bulk of the edge added over the other benchmarks we considered were already present in the leading-order approximation, the fact that the latter is much easier to compute advocates in favor of prioritizing its use in practice.\footnote{Of course, we can only conclude this after computing the first-order correction and comparing to the leading-order one.}

\begin{figure}
    \centering
    \includegraphics[scale=0.4]{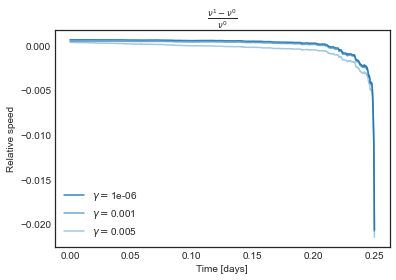}
    \caption{Difference $\nu^1-\nu^0$ relative to $\nu^0$ for various values of $\gamma.$ As $\nu^1$ is more front loaded than $\nu^0$, it slows down relative to it by the end of the execution program (recall $\nu^0 < 0$).}
    \label{fig:speed_rel_diff_ord1_to_ord0}
\end{figure}

\section{Accuracy results} \label{sec:accuracy}

\subsection{Leading-order approximations}

\begin{theorem} \label{thm:AsymptoticnessOrd0}
We have\footnote{Here and henceforth, the big-Os may depend on the point $\boldsymbol{y}.$}
$$
z(t, \boldsymbol{y}) = \overline{z}_0(t, \boldsymbol{y}) + O(\epsilon), \text{ as } \epsilon \downarrow 0,
$$
for each $(t,\boldsymbol{y}) \in \left[0,T\right]\times \mathbb{R}^d.$
\end{theorem}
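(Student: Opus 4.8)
The plan is to linearize the fully nonlinear equation \eqref{eq:MainPDE} around the averaged profile $z_0$ and to read off an $O(\epsilon)$ bound from a Feynman--Kac representation of the error. Set $u^\epsilon := z - \overline{z}_0 = z - z_0$, which vanishes at $t=T$. Subtracting the averaged ODE \eqref{eq:ODE_OrdZeroApprox} (recall $\mathcal{L}z_0 = 0$ since $z_0 = z_0(t)$) from \eqref{eq:MainPDE}, and using that by Theorem \ref{thm:ProptsSolnMainPDE}(b) both $z$ and $z_0$ are negative and bounded away from $0$, the mean value theorem applied to $\xi \mapsto (-\xi)^{1+1/\phi}$ yields a linear equation
\[
\partial_t u^\epsilon + \tfrac{1}{\epsilon}\mathcal{L}u^\epsilon + V^\epsilon\,u^\epsilon + f = 0, \qquad u^\epsilon|_{t=T}=0,
\]
with potential $V^\epsilon(t,\boldsymbol{y}) := -(1+\phi)\,\kappa^{-1/\phi}(-\xi^\epsilon)^{1/\phi}$ for some $\xi^\epsilon$ between $z_0$ and $z$, and source
\[
f(t,\boldsymbol{y}) := \phi\big(\kappa^{-1/\phi}-\langle\kappa^{-1/\phi}\rangle\big)|z_0|^{1+1/\phi} - \gamma\big(\sigma^{1+\phi}-\langle\sigma^{1+\phi}\rangle\big).
\]
The two structural facts I would exploit are: (i) $V^\epsilon$ is bounded uniformly in $(\epsilon,t,\boldsymbol{y})$, because $\kappa\geqslant\underline{\kappa}>0$ by (H2) and $\xi^\epsilon$ lies in a fixed compact subset of $\left]-\infty,0\right[$; and (ii) $f$ is centered, $\langle f\rangle=0$, and in fact $f=-\mathcal{L}z_1$ with $z_1 = \phi|z_0|^{1+1/\phi}\varphi_0+\gamma\varphi_1$ the corrector from \eqref{eq:Phi0}--\eqref{eq:Phi1}.

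Next I would write the Feynman--Kac representation for this linear Cauchy problem,
\[
u^\epsilon(t,\boldsymbol{y}) = \mathbb{E}_{t,\boldsymbol{y}}\!\left[\int_t^T e^{\int_t^s V^\epsilon(r,\boldsymbol{y}_r)\,dr}\,f(s,\boldsymbol{y}_s)\,ds\right],
\]
and then trade the singular factor $1/\epsilon$ hidden in the fast driver for a gain of one power of $\epsilon$, using the corrector. Applying Dynkin's formula to $s\mapsto e^{\int_t^s V^\epsilon}z_1(s,\boldsymbol{y}_s)$ and inserting $\tfrac1\epsilon\mathcal{L}z_1=-\tfrac1\epsilon f$ produces the identity
\[
u^\epsilon(t,\boldsymbol{y}) = \epsilon\,\mathbb{E}_{t,\boldsymbol{y}}\!\left[z_1(t,\boldsymbol{y}) - e^{\int_t^T V^\epsilon}z_1(T,\boldsymbol{y}_T) + \int_t^T e^{\int_t^s V^\epsilon}\big(V^\epsilon z_1 + \partial_s z_1\big)(s,\boldsymbol{y}_s)\,ds\right].
\]
This is the ``reflexive'' representation the paper alludes to: the explicit prefactor $\epsilon$ makes the desired order manifest, provided the bracket is bounded uniformly in $\epsilon$.

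To close the estimate I would bound that bracket. Since $|V^\epsilon|\leqslant C$, all exponential weights are dominated by $e^{CT}$. By (H5) the correctors $\varphi_0,\varphi_1$ are at most polynomially growing, hence so are $z_1$ and $\partial_s z_1 = (1+\phi)|z_0|^{1/\phi}\sgn(z_0)\,z_0'\,\varphi_0$ (here $z_0,z_0'$ are bounded via \eqref{eq:ODE_OrdZeroApprox} and Theorem \ref{thm:ProptsSolnMainPDE}(b)). Combined with the uniform-in-time moment bounds of (H4) and the scaling $\boldsymbol{y}^\epsilon_s\stackrel{d}{=}\boldsymbol{y}^1_{(s-t)/\epsilon}$, each expectation in the bracket is finite and bounded by a constant depending on $\boldsymbol{y}$ but not on $\epsilon$, exactly the dependence permitted by the statement's footnote.

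Gronwall's lemma, central to the paper's general strategy, enters only lightly here: the representation is ``reflexive'' in that $V^\epsilon$ depends on $u^\epsilon$ through $\xi^\epsilon$, but the a priori uniform bound $|V^\epsilon|\leqslant C$ severs this circularity, so the bracket is controlled outright and $|u^\epsilon(t,\boldsymbol{y})|\leqslant C(\boldsymbol{y})\,\epsilon$ follows. The main obstacle I anticipate is concentrated in the rigor rather than the mechanism, and is twofold: first, justifying the stochastic representation when $z$ is a priori only the bounded viscosity solution of Theorem \ref{thm:ProptsSolnMainPDE}(a), which I would handle by noting that the linearized problem has bounded coefficients and a unique bounded viscosity solution coinciding with its Feynman--Kac functional; and second, ensuring that the bound on $V^\epsilon$ --- equivalently, the lower bound $-z\geqslant 1/[C(T-t+A^{-1/\phi})^\phi]$ --- is genuinely uniform in $\epsilon$, so that the compact set containing $\xi^\epsilon$ does not degenerate as $\epsilon\downarrow0$. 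This is where I would lean on the uniform boundedness of the regularized rate emphasized in the paper's overview.
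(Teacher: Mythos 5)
Your argument is correct and follows essentially the same route as the paper: the paper sets $R_0 := z - \overline{z}_0 - \epsilon z_1$, linearizes via the difference quotient $G(z,z_0)$ (your mean-value potential $V^\epsilon$, bounded uniformly by Theorem \ref{thm:ProptsSolnMainPDE}(b) and (\textbf{H2})), and applies Feynman--Kac to obtain exactly your representation with the explicit prefactor $\epsilon$, closing the estimate through the polynomial growth of $\varphi_0,\varphi_1$ from (\textbf{H5}) and the uniform moment bounds of (\textbf{H4}). The only cosmetic difference is that you extract the factor $\epsilon$ a posteriori by Dynkin's formula applied to the corrector, whereas the paper subtracts $\epsilon z_1$ at the PDE level first; the resulting identity and bounds are the same, and as you anticipate, no genuine Gronwall step is needed at this order.
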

\begin{proof}
Throughout this proof, all the generic constants $C$ figuring in the estimates are independent of $\epsilon.$ Let us write 
\begin{equation} \label{Residue1}
    R_0:= z-\overline{z}_0 - \epsilon z_1,
\end{equation}
From Eqs. \eqref{eq:MainPDE}, \eqref{eq:ODE_OrdZeroApprox} and \eqref{eq:PDE_OrdOneApprox}, we deduce
\begin{equation} \label{eq:PDEofR0}
    \partial_t R_0 + \frac{1}{\epsilon}\mathcal{L} R_0 +\phi\kappa^{-1/\phi}G(z,z_0)R_0 + \epsilon\left[ G(z,z_0)z_1 + \partial_t z_1 \right] = 0,
\end{equation}
with terminal condition
$$
R(T,\boldsymbol{y}) = -\epsilon z_1(T,\boldsymbol{y}),
$$
where we have written
$$
G(s,s^\prime) := \begin{cases}
\frac{|s|^{1+1/\phi} - |s^\prime|^{1+1/\phi}}{s-s^\prime} &\text{ if } s\neq s^\prime,\\
\left(1 + \frac{1}{\phi}\right)|s|^{1/\phi}\sgn(s) &\text{ otherwise. }
\end{cases}
$$
Particularly, we have 
\begin{equation} \label{eq:basic_G_bound}
    | G(s,s^\prime) | \leq C \left( |s|^{1/\phi} + \left| s^\prime \right|^{1/\phi} \right).
\end{equation}

For each $\left(t,\boldsymbol{y}\right) \in \left[0,T\right]\times\mathbb{R}^d,$ let us set 
$$
a\left(t,\boldsymbol{y}\right) := \phi\kappa\left( \boldsymbol{y} \right)^{-1/\phi}G\left(z\left(t,\boldsymbol{y}\right),\,z_0(t)\right) \text{ and } f\left(t,\boldsymbol{y}\right) := G\left(z(t,\boldsymbol{y}),z_0(t)\right)z_1(t,\boldsymbol{y}) + \partial_t z_1(t,\boldsymbol{y}) 
$$
Firstly, we have
\begin{align} \label{eq:bdd_of_coef_a_leading_ord_approx}
  \begin{split}
    \left| a(t,\,\boldsymbol{y}) \right| &\leq C | G\left(z\left(t,\boldsymbol{y}\right),\,z_0(t)\right) | \\
    &\leq C\left( \left|z\left(t,\boldsymbol{y}\right)\right|^{1/\phi} + \left| z_0(t) \right|^{1/\phi} \right) \\
    &\leq C,
  \end{split}
\end{align}
since $z$ and $z_0$ are bounded (independently of $\epsilon,$ cf. Theorem \ref{thm:ProptsSolnMainPDE} (b) and \eqref{eq:leadingOrdApprox}). Secondly, it follows analogously that
\begin{equation} \label{eq:init_est_src_acc_leading_order}
    \left| f(t,\,\boldsymbol{y}) \right| \leq C\left( \left| z_1\left( t,\,\boldsymbol{y} \right) \right| + \left| \partial_t z_1\left( t,\,\boldsymbol{y} \right) \right| \right).
\end{equation}
From the boundedness properties of $z_0$ and $c$ (cf. \eqref{eq:leadingOrdApprox} and \eqref{eq:BdryLayer}), it follows that 
\begin{equation} \label{eq:bddnss_z1}
    \left| z_1\left( t,\,\boldsymbol{y} \right) \right| \leq C\left( 1 + \left| \varphi_0(\boldsymbol{y}) \right| + \left| \varphi_1(\boldsymbol{y}) \right| \right).
\end{equation}
From \eqref{eq:leadingOrdApprox} and \eqref{eq:BdryLayer}, alongside the ODEs \eqref{eq:ODE_OrdZeroApprox} and \eqref{eq:ODE_bdry_layer}, we also see that
\begin{equation} \label{eq:bddnss_z0prime}
    |z_0^\prime(t)| \leq C \text{ and } |c^\prime(t)| \leq C \hspace{1.0cm} (0 \leq t \leq T).
\end{equation}
Furthermore, we also see from \eqref{eq:leadingOrdApprox} that $|z_0| = -z_0 \geq C > 0,$ from where we deduce the estimate 
$$
\left| \partial_t z_1\left(t,\,\boldsymbol{y}\right) \right| \leq C \left( 1 + \left| \varphi_0(\boldsymbol{y}) \right| + \left| \varphi_1(\boldsymbol{y}) \right| \right),
$$
whence, together with \eqref{eq:init_est_src_acc_leading_order}, we derive the following   
\begin{equation} \label{eq:bddns_src_f_leading_order_approx}
    \left| f(t,\,\boldsymbol{y}) \right| \leq C \left( 1 + \left| \varphi_0(\boldsymbol{y}) \right| + \left| \varphi_1(\boldsymbol{y}) \right| \right) \leq C \left( 1 + |\boldsymbol{y}|^n \right),
\end{equation}
for a sufficiently large integer $n \geq 1,$ since $\varphi_0$ and $\varphi_1$ are at most polynomially growing.


Using the Feynman-Kac Theorem, we infer the representation
\begin{equation} \label{eq:SecToLastEstToConcludeOrdZeroApprox}
    R_0\left(t,\,\boldsymbol{y}\right) = \epsilon\mathbb{E}_{t,\boldsymbol{y}}\left[\int_t^T e^{\int_t^u a(r,\boldsymbol{y}_r)\,dr}f(u,\boldsymbol{y}_u) \,du - e^{\int_t^T a(r,\boldsymbol{y}_r)\,dr}z_1(T,\boldsymbol{y}_T) \right].
\end{equation}
From \eqref{eq:bdd_of_coef_a_leading_ord_approx}, \eqref{eq:bddnss_z1}, \eqref{eq:bddns_src_f_leading_order_approx}, and the identity \eqref{eq:SecToLastEstToConcludeOrdZeroApprox}, we see that 
$$
\left| R_0\left(t,\boldsymbol{y}\right) \right| \leq C \epsilon\mathbb{E}_{t,\boldsymbol{y}}\left[\int_t^T \left( 1 +  |\boldsymbol{y}_u|^n\right) \,du + 1 + |\boldsymbol{y}_T|^n \right] \leq C \epsilon \left( 1 + \sup_{u \geq 0} \mathbb{E}_{0,\boldsymbol{y}}\left[ |\boldsymbol{y}^1_u|^n \right] \right),
$$
which, according to our hypothesis (\textbf{H4}), implies
$$
\left| R_0\left(t,\boldsymbol{y}\right) \right| \leqslant C\epsilon,
$$
thus finishing the proof.
\end{proof}

\subsection{First-order correction}

\begin{lemma} \label{lem:z2}
We can take $z_2$ satisfying \eqref{ThirdEqn}, and at most polynomially growing in the spatial variable.
\end{lemma}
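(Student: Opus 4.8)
The plan is to read \eqref{ThirdEqn} as a Poisson equation for $z_2$ in the spatial variable $\boldsymbol{y}$, with $t$ entering only as a parameter, and then to invoke hypothesis (\textbf{H5}). Setting $g(t,\boldsymbol{y}) := -\mathcal{L}_1^\prime(z_0)\cdot z_1 = -\partial_t z_1 - (1+\phi)\left(|z_0|/\kappa\right)^{1/\phi}\sgn(z_0)\,z_1$, equation \eqref{ThirdEqn} reads $\mathcal{L}z_2 = g$. By (\textbf{H5}), it then suffices to verify that, for each fixed $t$, the function $g(t,\cdot)$ is continuous, at most polynomially growing, lies in $L^1(\Pi)$, and is centered.

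The centering condition is precisely the solvability requirement that was already used to determine the boundary layer $c$. Indeed, the ODE \eqref{eq:ODE_bdry_layer} for $c$ was derived exactly by imposing $\left\langle \mathcal{L}_1^\prime(z_0)\cdot z_1 \right\rangle = 0$, so with this choice of $c$ one automatically has $\langle g(t,\cdot)\rangle = 0$ for every $t$. This is the single step where the specific construction of $z_1$ matters, and it is where I would place the emphasis.

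For regularity and growth I would exploit the explicit representation $z_1 = \phi|z_0|^{1+1/\phi}\varphi_0 + \gamma\varphi_1 + c$. Since $z_0$ and $c$ are bounded functions of $t$ alone and $\varphi_0,\varphi_1$ are continuous and at most polynomially growing (by \eqref{eq:Phi0}, \eqref{eq:Phi1}, and (\textbf{H5})), the function $z_1(t,\cdot)$ is continuous and at most polynomially growing, uniformly in $t\in[0,T]$. Differentiating in $t$ and using the boundedness of $z_0^\prime$ and $c^\prime$ from \eqref{eq:bddnss_z0prime}, the term $\partial_t z_1$ inherits the same bounds (its growth comes only through $\varphi_0$). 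The prefactor $(1+\phi)(|z_0|/\kappa)^{1/\phi}\sgn(z_0)$ is bounded because $\kappa\geq\underline{\kappa}>0$ by (\textbf{H2}) and $z_0$ is bounded; hence $g(t,\cdot)$ is continuous and at most polynomially growing, and membership in $L^1(\Pi)$ follows at once from (\textbf{H4}), which guarantees that $\Pi$ has moments of all orders.

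With these four properties established, (\textbf{H5}) furnishes an at most polynomially growing solution $z_2(t,\cdot)\in\mathcal{D}(\mathcal{L})\cap L^1(\Pi)$ for each $t$, which proves the claim. I do not anticipate a genuine obstacle here: the whole content lies in the centering condition, which is automatic from the construction of $c$, and in propagating the polynomial growth of $\varphi_0,\varphi_1$ through bounded coefficients. The only point demanding mild care is that (\textbf{H5}) determines $z_2$ only up to an additive function of $t$, since the kernel of $\mathcal{L}$ is spanned by $\psi_0\equiv 1$; but as the lemma asserts merely that we \emph{can take} an admissible $z_2$, any fixed choice of that free function — pinned down later by the solvability condition at the next order — suffices.
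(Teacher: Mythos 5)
Your proof is correct for the lemma as stated, and it isolates the right key point: the centering $\left\langle \mathcal{L}_1^\prime(z_0)\cdot z_1 \right\rangle = 0$ is exactly the solvability condition that was already enforced when choosing the boundary layer $c$, so (\textbf{H5}) applies. However, your route is genuinely different from the paper's in its mechanics. You treat $t$ as a parameter and apply (\textbf{H5}) to the full source $-\mathcal{L}_1^\prime(z_0)\cdot z_1$ separately for each fixed $t$. The paper instead decomposes that source explicitly, using $\partial_t(z_1-c)$ and the representation of $z_1$, into a finite sum of bounded $t$-dependent coefficients ($b_2$, $\phi\, b_2\, |z_0|^{1+1/\phi}$, $\gamma b_2$, $-c\, b_2$) multiplying fixed centered spatial functions, solves four $t$-independent Poisson equations for $\varphi_2,\varphi_3,\varphi_4$ (and reuses $\varphi_0$), and assembles $z_2$ in closed form as \eqref{eq:particular_z2}. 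What the constructive version buys is important downstream: it yields a single polynomial growth bound \emph{uniform in} $t$, differentiability of $z_2$ in $t$, and the estimate $|\partial_t z_2|+|z_2|\leqslant C(1+|\boldsymbol{y}|^n)$ of \eqref{eq:z2_poly}, all of which are used in the proof of Theorem \ref{thm:AsymptoticnessOrd1}. Your parametric application of (\textbf{H5}) gives, a priori, a growth constant and a solution that could vary arbitrarily (even non-measurably) with $t$, so while it proves the lemma as literally phrased, you would still need to upgrade it — essentially by making the paper's explicit choice — before it can feed into the accuracy argument. You do correctly note the additive-function-of-$t$ ambiguity in the kernel of $\mathcal{L}$, which the paper also remarks on after the lemma.
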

\begin{proof}
    A function such as $z_2$ must comply with the subsequent equations:
    \begin{align} \label{eq:NecOrigRelation_z2}
      \begin{split}
        0 &= \mathcal{L}z_2 + \mathcal{L}_1^\prime(z_0)\cdot z_1 \\
          &= \mathcal{L}z_2 + \mathcal{L}_1^\prime(z_0)\cdot z_1 - \left\langle \mathcal{L}_1^\prime(z_0)\cdot z_1  \right\rangle \\
          &= \mathcal{L}z_2 + \left( \partial_t z_1 - \left\langle \partial_t z_1 \right\rangle \right) + \left[ (1+\phi)\left( \frac{|z_0|}{\kappa} \right)^{1/\phi} \sgn(z_0)z_1 - \left\langle (1+\phi)\left( \frac{|z_0|}{\kappa} \right)^{1/\phi} \sgn(z_0)z_1 \right\rangle \right] \\
          &= \mathcal{L}z_2 + \partial_t(z_1 - c) + (1+\phi)|z_0|^{1/\phi} \sgn(z_0)\left( \frac{z_1}{\kappa^{1/\phi}} - \left\langle \frac{z_1}{\kappa^{1/\phi}} \right\rangle \right).
      \end{split}
    \end{align}
    Firstly, we notice that
    \begin{equation}\label{eq:defn_of_little_d}
        \partial_t(z_1 - c) = (\phi+1)|z_0|^{1/\phi}\sgn(z_0)\varphi_0.
    \end{equation}
    Secondly, we write
    \begin{equation} \label{eq:rewriting_z1_by_powerkappa}
        \frac{z_1}{\kappa^{1/\phi}} - \left\langle \frac{z_1}{\kappa^{1/\phi}} \right\rangle = \phi|z_0|^{1+1/\phi} \left( \frac{\varphi_0}{\kappa^{1/\phi}} - \left\langle \frac{\varphi_0}{\kappa^{1/\phi}} \right\rangle \right) + \gamma \left( \frac{\varphi_1}{\kappa^{1/\phi}} - \left\langle \frac{\varphi_1}{\kappa^{1/\phi}} \right\rangle \right) + c\left( \frac{1}{\kappa^{1/\phi}} - \left\langle \frac{1}{\kappa^{1/\phi}} \right\rangle \right).
    \end{equation}
    In this way, by introducing the auxiliary function
    \begin{equation} \label{eq:aux_b2_defn}
        b_2(t) := -(\phi + 1)|z_0(t)|^{1/\phi}\sgn(z_0(t)) \hspace{1.0cm} (0 \leqslant t \leqslant T),
\end{equation}
    as well as the at most polynomially growing solutions $\varphi_2,\,\varphi_3,$ and $\varphi_4$ to the Poisson equations
    $$
    \begin{cases}
        \mathcal{L}\varphi_2 = -\varphi_0,\\
        \mathcal{L}\varphi_3 = \left\langle \frac{\varphi_0}{\kappa^{1/\phi}} \right\rangle - \frac{\varphi_0}{\kappa^{1/\phi}},\\
        \mathcal{L}\varphi_4 = \left\langle \frac{\varphi_1}{\kappa^{1/\phi}} \right\rangle - \frac{\varphi_1}{\kappa^{1/\phi}},\\
        \left\langle \varphi_2 \right\rangle = \left\langle \varphi_3 \right\rangle = \left\langle \varphi_4 \right\rangle = 0,
    \end{cases}
    $$
    it follows from \eqref{eq:Phi0}, \eqref{eq:NecOrigRelation_z2}, \eqref{eq:defn_of_little_d}, and \eqref{eq:rewriting_z1_by_powerkappa} that $z_2$ must satisfy
    $$
    \mathcal{L}\left( z_2 - \left\{ b_2 \, \varphi_2 + \phi\, b_2 \, |z_0|^{1+1/\phi}\varphi_3 + \gamma\, b_2\, \varphi_4 - c\, b_2\, \varphi_0 \right\} \right) = 0. 
    $$
    Hence, upon taking 
    \begin{equation} \label{eq:particular_z2}
            z_2 =  b_2 \, \varphi_2 + \phi\, b_2 \, |z_0|^{1+1/\phi}\varphi_3 + \gamma\, b_2\, \varphi_4 - c\, b_2\, \varphi_0,
    \end{equation}
    it is straightforward to conclude that $z_2$ has the properties we stated in the present Lemma.
\end{proof}

\begin{remark}
    Any function $z_2$ complying with the properties we stated Lemma \ref{lem:z2} must differ from the one we exposed in \eqref{eq:particular_z2} by another one which only depends on time. In this connection, the way we chose the boundary layer term $c$ in the definition of $z_1$ was key in the last proof. If we were to seek further higher-order approximation results, we would need to take this time-dependent difference in an appropriate way (thus identifying additional boundary layer terms). For our current purpose, which is to investigate the accuracy of the first-order correction, taking $z_2$ according to \eqref{eq:particular_z2} will suffice.   
\end{remark}

\begin{theorem} \label{thm:AsymptoticnessOrd1}
Let us assume in \eqref{eq:AsymptoticAvg} that $g \in \mathcal{D}\left(\mathcal{L}\right)$ with at most polynomial growth and $\left\langle g \right\rangle = 0$ implies\footnote{Cf. the developments within Section 3.2 from \cite[Eq. (3.10)]{fouque2011multiscale} onwards.}
\begin{equation} \label{eq:polynomial_condn_asymptotic_avg}
    \left| \mathbb{E}_{t,\,\boldsymbol{y}}\left[ g(\boldsymbol{y}^{\epsilon}_T) \right] \right| \leqslant C\left(1+\left|\boldsymbol{y}\right|^n \right) e^{-\frac{a}{\epsilon}(T-t)},
\end{equation}
where $a > 0$ is the spectral gap of $\mathcal{L},$ the constant $C=C(g)$ and the positive integer $n = n(g)$ are sufficiently large, but independent of $t \in \left[0,\,T\right],\, \boldsymbol{y} \in \mathbb{R}^d,$ and $0 < \epsilon \leqslant \epsilon_0,$ for some small enough $\epsilon_0 > 0.$ Then, for $p \geqslant 1$ and $0 \leqslant t_0 < T,$ we have the asymptotics
\begin{equation} \label{eq:accuracy_ord1_uniftime}
    \sup_{t_0\leqslant t \leqslant T }\left\{ (T-t)\mathbb{E}_{t_0,\,\boldsymbol{y}}\left[ \left| z(t,\boldsymbol{y}^\epsilon_t) - \overline{z}_1(t,\boldsymbol{y}^\epsilon_t) \right|^p \right]^{1/p} \right\} = O_p(\epsilon^2), \text{ as } \epsilon \downarrow 0,
\end{equation}
and
\begin{equation} \label{eq:accuracy_ord1_nonblowup}
    \int_{t_0}^T \mathbb{E}_{t_0,\,\boldsymbol{y}}\left[ \left| z(t,\boldsymbol{y}^\epsilon_t) - \overline{z}_1(t,\boldsymbol{y}^\epsilon_t) \right|^p \right]^{1/p} \, dt = O_p(\epsilon^2), \text{ as } \epsilon \downarrow 0.
\end{equation}
\end{theorem}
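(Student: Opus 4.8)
The plan is to exploit the fact that $R_0 := z - \overline{z}_1$ coincides with the residue $z - \overline{z}_0 - \epsilon z_1$ already analyzed in the proof of Theorem \ref{thm:AsymptoticnessOrd0}, so that the Feynman--Kac representation \eqref{eq:SecToLastEstToConcludeOrdZeroApprox} is available for free. The goal is to upgrade the crude bound $|R_0|\leqslant C\epsilon$ to the sharper pointwise estimate $|R_0(t,\boldsymbol{y})| \leqslant C(1+|\boldsymbol{y}|^n)\left(\epsilon^2 + \epsilon\, e^{-\frac{a}{\epsilon}(T-t)}\right)$, where $a>0$ is the spectral gap; the two accuracy statements \eqref{eq:accuracy_ord1_uniftime} and \eqref{eq:accuracy_ord1_nonblowup} then follow by evaluating along paths and absorbing the boundary layer. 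I would split \eqref{eq:SecToLastEstToConcludeOrdZeroApprox} into its source part $\epsilon\,\mathbb{E}_{t,\boldsymbol{y}}[\int_t^T e^{\int_t^u a\,dr} f\,du]$ and its terminal part $-\epsilon\,\mathbb{E}_{t,\boldsymbol{y}}[e^{\int_t^T a\,dr} z_1(T,\boldsymbol{y}_T)]$, treating them separately, since the extra order of smallness arises for quite different reasons in each. Throughout I would use that the coefficient $a(\cdot,\cdot)$ from that proof is bounded and nonpositive (as $z,z_0<0$), so the running exponential is $\leqslant 1$.

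For the source part, the key is that $z_2$ from Lemma \ref{lem:z2} was built precisely to neutralize the leading behaviour of the integrand $f$. Indeed, by \eqref{ThirdEqn} and the definition of $\mathcal{L}_1'(z_0)$, the source integrand equals $-\mathcal{L}z_2$ up to a remainder $r$ proportional to $\left(G(z,z_0) - G(z_0,z_0)\right) z_1$, which by the local Lipschitz character of $G$ together with the first-order bound $z - z_0 = O(\epsilon)$ from Theorem \ref{thm:AsymptoticnessOrd0} satisfies $|r|\leqslant C\epsilon(1+|\boldsymbol{y}|^n)$. The piece carrying $\mathcal{L}z_2$ I would handle by a Dynkin-type identity: applying It\^o's formula to $u \mapsto e^{\int_t^u a\,dr} z_2(u,\boldsymbol{y}_u)$ and taking expectations rewrites $\mathbb{E}_{t,\boldsymbol{y}}[\int_t^T e^{\int_t^u a\,dr}\mathcal{L}z_2\,du]$ as $\epsilon$ times boundary and drift contributions, all polynomially bounded by (\textbf{H4}); this produces the missing power of $\epsilon$. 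Combining, the entire source part is $O(\epsilon^2(1+|\boldsymbol{y}|^n))$ uniformly in $t$, with no boundary layer.

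For the terminal part, the decisive observation is that $z_1(T,\cdot) = \phi|z_0(T)|^{1+1/\phi}\varphi_0 + \gamma\varphi_1$ lies in $\mathcal{D}(\mathcal{L})$, is polynomially bounded, and is centered, since $c(T)=0$ by \eqref{eq:BdryLayer} and $\langle\varphi_0\rangle = \langle\varphi_1\rangle = 0$ by \eqref{eq:Phi0}--\eqref{eq:Phi1}. This places us exactly in the regime of the decay hypothesis \eqref{eq:polynomial_condn_asymptotic_avg}. The complication, which I expect to be the main obstacle, is the path-dependent weight $e^{\int_t^T a\,dr}$, coupled to $z_1(T,\boldsymbol{y}_T)$ and hence not directly extractable. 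I would resolve it by a perturbation argument: let $u(s,\boldsymbol{y})$ solve $\partial_s u + \frac{1}{\epsilon}\mathcal{L}u + a\,u = 0$ with $u(T,\cdot)=z_1(T,\cdot)$, and write $u = u^{(0)} + v$, where $u^{(0)}(s,\boldsymbol{y}) = \mathbb{E}_{s,\boldsymbol{y}}[z_1(T,\boldsymbol{y}_T)]$ solves the same equation with the bounded coefficient dropped. Then $u^{(0)}$ decays like $(1+|\boldsymbol{y}|^n)e^{-\frac{a}{\epsilon}(T-s)}$ directly by \eqref{eq:polynomial_condn_asymptotic_avg}, while $v$ solves a problem with zero terminal data and source $-a\,u^{(0)}$; since $a$ is bounded and $e^{\int a}\leqslant 1$, a Feynman--Kac estimate against the exponentially small $u^{(0)}$ gives $|v|\leqslant C\epsilon(1+|\boldsymbol{y}|^n)$. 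Hence the terminal part contributes $O\!\left((1+|\boldsymbol{y}|^n)(\epsilon^2 + \epsilon\, e^{-\frac{a}{\epsilon}(T-t)})\right)$, the exponential being the announced terminal layer.

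Putting the two bounds together yields the pointwise estimate above. To conclude, I would evaluate $R_0$ along $\boldsymbol{y}^\epsilon$ started at $\boldsymbol{y}$ and use the uniform-in-time moments from (\textbf{H4}) to replace $\mathbb{E}_{t_0,\boldsymbol{y}}[(1+|\boldsymbol{y}^\epsilon_t|^n)^p]^{1/p}$ by $C(1+|\boldsymbol{y}|^n)$. The weighted estimate \eqref{eq:accuracy_ord1_uniftime} then follows from the elementary identity $\sup_{s\geqslant 0} s\,e^{-as/\epsilon} = \epsilon/(ae)$, which turns the $O(\epsilon)$ terminal layer, once multiplied by $(T-t)$, into $O(\epsilon^2)$; the integrated estimate \eqref{eq:accuracy_ord1_nonblowup} follows likewise from $\int_0^\infty e^{-as/\epsilon}\,ds = \epsilon/a$. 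The genuinely delicate point is the interaction between the centered terminal data and the path-dependent exponential weight; everything else is bookkeeping with polynomial moment bounds.
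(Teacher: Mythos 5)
Your proposal is correct, and it reaches the paper's key intermediate bound $C\epsilon^2 + C\epsilon e^{-\frac{a}{\epsilon}(T-t)}$ (cf.\ \eqref{eq:final_estimate_r}) by a genuinely different technical route. The paper augments the residue to $R_1 := z - \overline{z}_1 - \epsilon^2 z_2$, so that the corrector of Lemma \ref{lem:z2} cancels the $O(\epsilon)$ part of the source already at the PDE level via a second-order divided-difference decomposition; it then treats the linearized term $g_\epsilon R_1$ as part of the \emph{source}, obtains a fixed-point Feynman--Kac representation, and closes with H\"older and Gronwall along the paths of $\boldsymbol{y}^\epsilon$ before the same final $(T-t)$-weighting and integration steps you describe. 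You instead keep the residue as $z-\overline{z}_1$ and the linear term inside the exponential weight of \eqref{eq:SecToLastEstToConcludeOrdZeroApprox}, extracting the extra power of $\epsilon$ from the $\mathcal{L}z_2$ portion of the source probabilistically through a Dynkin identity (the $\frac{1}{\epsilon}\mathcal{L}$ in the generator is what produces the gain) --- the probabilistic counterpart of the paper's subtraction of $\epsilon^2 z_2$. The genuinely novel piece is your treatment of the terminal term: because the potential sits in the exponential, you must decouple the path-dependent weight from the centered datum $z_1(T,\cdot)$, and your perturbation $u = u^{(0)}+v$ does this correctly ($v$ gains a factor $\epsilon$ by integrating the exponential decay that \eqref{eq:polynomial_condn_asymptotic_avg} gives for $u^{(0)}$), whereas the paper sidesteps the issue entirely by moving the potential into the source at the price of the Gronwall step. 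What your route buys is a pointwise bound $|z-\overline{z}_1|\leqslant C(1+|\boldsymbol{y}|^n)\bigl(\epsilon^2+\epsilon e^{-\frac{a}{\epsilon}(T-t)}\bigr)$, slightly stronger than the path-wise $L^p$ estimate of the paper, with no Gronwall; what it costs is the extra perturbation lemma. Two points you should make explicit if you write this up: the bound $|r|\leqslant C\epsilon(1+|\boldsymbol{y}|^n)$ on your source remainder needs the constant in Theorem \ref{thm:AsymptoticnessOrd0} to grow at most polynomially in the starting point, which is what (\textbf{H4}) is implicitly supplying; and the Dynkin identity requires $z_2$ to be regular enough for It\^o's formula with a true martingale part, the same tacit smoothness the paper assumes when invoking Feynman--Kac.
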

\begin{proof}
Let us fix $z_2$ according to the construction \eqref{eq:particular_z2} we made in the proof of Lemma \ref{lem:z2}. We deduce from Eqs. \eqref{eq:MainPDE}, \eqref{eq:ODE_OrdZeroApprox}, \eqref{eq:PDE_OrdOneApprox} and \eqref{ThirdEqn} that $R_1 := z - \overline{z}_1 - \epsilon^2 z_2$ must solve
\begin{equation} \label{eq:PDEforR1}
    \partial_t R_1 + \frac{1}{\epsilon} \mathcal{L} R_1 + \phi\kappa^{-1/\phi}\left[|z|^{1+1/\phi} - |z_0|^{1+1/\phi} - \epsilon\left(1+\frac{1}{\phi}\right)|z_0|^{1/\phi}\sgn\left(z_0\right)z_1 \right] + \epsilon^2 \partial_t z_2 = 0.
\end{equation}
Since $z - ( z_0 + \epsilon z_1 ) = R_1 + \epsilon^2z_2,$ we observe that
\begin{align} \label{eq:EqualitiesToDerivePDEofR1}
    \begin{split}
        |z&|^{1+1/\phi} -|z_0|^{1+1/\phi} - \epsilon\left(1+\frac{1}{\phi}\right)|z_0|^{1/\phi}\sgn\left(z_0\right)z_1 \\
        &= \left( |z|^{1+1/\phi}- \left|z_0 + \epsilon z_1\right|^{1+1/\phi} \right) + \left[ \left|z_0 + \epsilon z_1\right|^{1+1/\phi} - |z_0|^{1+1/\phi} - \epsilon\left(1+\frac{1}{\phi}\right)|z_0|^{1/\phi}\sgn\left(z_0\right)z_1 \right] \\
        &= G(z,z_0 + \epsilon z_1)(R_1 +\epsilon^2 z_2) + H(z_0+\epsilon z_1,z_0)(z_1)^2 \epsilon^2 
    \end{split}
\end{align}
where we define
\begin{equation} \label{eq:DefnOfH}
    H\left(s,s^\prime \right) :=\begin{cases}
    \frac{|s|^{1+1/\phi} - |s^\prime|^{1+1/\phi} - (1+1/\phi)|s^\prime|^{1/\phi}\sgn(s^\prime)s}{(s-s^\prime)^2} &\text{ if } s\neq s^\prime,\\
    \frac{1}{2\phi}\left(1 + \frac{1}{\phi}\right)|s|^{1/\phi - 1} &\text{ otherwise.}
    \end{cases}
\end{equation}
Putting \eqref{eq:PDEforR1} and \eqref{eq:EqualitiesToDerivePDEofR1} together, we obtain
$$
\partial_t R_1 + \frac{1}{\epsilon} \mathcal{L} R_1 + g_\epsilon R_1 + \epsilon^2 h_\epsilon = 0,
$$
where we define $g_\epsilon$ and $h_\epsilon$ as 
$$
g_\epsilon\left(t,\boldsymbol{y}\right) := \phi\kappa^{-1/\phi}\left( \boldsymbol{y} \right) G\left(z\left(t,\boldsymbol{y}\right),\,z_0(t) + \epsilon z_1\left(t,\boldsymbol{y}\right) \right),
$$
and
$$
h_\epsilon\left( t,\boldsymbol{y} \right) := \partial_t z_2\left( t,\boldsymbol{y} \right) +G\left(z\left( t,\boldsymbol{y} \right),\,z_0(t) + \epsilon z_1\left( t,\boldsymbol{y} \right) \right)z_2\left( t,\boldsymbol{y} \right) + H\left(z_0(t) + \epsilon z_1\left( t,\boldsymbol{y} \right),\,z_0(t)\right)\left[ z_1\left( t,\boldsymbol{y} \right) \right]^2.
$$

We now proceed to show that $g_\epsilon$ and $h_\epsilon$ are at most polynomially growing in the spatial variable, uniformly in time and also for sufficiently small values of $\epsilon.$ We emphasize that in subsequent estimates, we assume the generic constant $C$ and the integers $n$ to be sufficiently large, and they may change from line to line throughout estimates. However, both $C$ and $n$ are independent of a particular point $(t,\,\boldsymbol{y})$ or of a particular positive $\epsilon\leqslant 1.$

\textbf{CLAIM 1:} \textit{We have $\left| g_\epsilon\left(t,\,\boldsymbol{y}\right) \right| \leqslant C (1+|\boldsymbol{y}|^n),$ for some large enough integer $n,$ where we can take $C > 0$ uniformly with respect to $(t,\,\epsilon) \in \left[0,\,T\right]\times \left]0,\,1\right].$}

Indeed, by using (\textbf{H2}) and \eqref{eq:basic_G_bound}, we can argue as in \eqref{eq:bdd_of_coef_a_leading_ord_approx} to obtain
$$
\left|g_\epsilon(t,\,\boldsymbol{y})\right| \leqslant C \left( 1 + |\boldsymbol{y}|^n + \epsilon |z_1(t,\,\boldsymbol{y})|^{1/\phi} \right) \leqslant C \left( 1+|\boldsymbol{y}|^n\right),
$$
for some sufficiently large positive integer $n,$ as $\epsilon \leqslant 1$ and $z_1$ is at most polynomially growing in $\boldsymbol{y}$ uniformly in time, see \eqref{eq:bddnss_z1} and \eqref{eq:bddns_src_f_leading_order_approx}. 

\textbf{CLAIM 2:} \textit{We have $\left| h_\epsilon\left(t,\,\boldsymbol{y}\right) \right| \leqslant C (1+|\boldsymbol{y}|^n),$ for some large enough integer $n,$ where we can take $C > 0$ uniformly with respect to $(t,\,\epsilon) \in \left[0,\,T\right]\times \left]0,\,\infty\right[.$}

To prove the bound on $h_\epsilon,$ let us first notice that by \eqref{eq:DefnOfH}, we have
$$
|H(s,\,s^\prime)| \leqslant C \left( |s|^{1/\phi - 1} + |s^\prime|^{1/\phi - 1} \right) \hspace{1.0cm} (s,s^\prime \in \mathbb{R}).
$$
Let us recall from (\textbf{H2}) that $0 < \phi \leqslant 1.$ Thus, by invoking once again \eqref{eq:basic_G_bound}, we deduce
\begin{align} \label{eq:bounding_heps_1}
  \begin{split}
    |h_\epsilon\left(t,\,\boldsymbol{y}\right)| &\leqslant C\Big[ \left| \partial_t z_2(t,\,\boldsymbol{y}) \right| + (1+|\boldsymbol{y}|^n)|z_2(t,\,\boldsymbol{y})|  \\
    &\hspace{1.2cm}+ \left( \left| z_0(t) + \epsilon z_1\left( t,\boldsymbol{y} \right)\right|^{1/\phi-1} + \left| z_0(t) \right|^{1/\phi - 1} \right)\left(1+|\boldsymbol{y}|^n\right) \Big] \\
    &\leqslant C\left(1+|\boldsymbol{y}|^n\right)\left( \left| \partial_t z_2(t,\,\boldsymbol{y}) \right| + |z_2(t,\,\boldsymbol{y})| + 1 \right),
  \end{split}
\end{align}
where we used that $1/\phi - 1 \geqslant 0,\, \epsilon \leqslant 1,$ as well as the same bounds on $z_0$ and $z_1$  which we used in the proof of Theorem \ref{thm:AsymptoticnessOrd0}.   By the boundedness of $z_0,$ it is straightforward to conclude that $b_2$ figuring in \eqref{eq:aux_b2_defn} is bounded uniformly in $(t,\,\epsilon).$ Furthermore, since $0< \phi \leqslant 1$, and recalling \eqref{eq:bddnss_z0prime}, we deduce this same boundedness property for $b_2^\prime.$ Hence, using the fact that $\varphi_0,\,\varphi_2,\,\varphi_3,$ and $\varphi_4$ are at most polynomially growing in $\boldsymbol{y},$ we infer from \eqref{eq:particular_z2} the following relation:
\begin{equation} \label{eq:z2_poly}
    \left| \partial_t z_2(t,\,\boldsymbol{y})\right| + \left| z_2(t,\,\boldsymbol{y})\right| \leqslant C\left(1+\left|\boldsymbol{y}\right|^n \right).
\end{equation}
Putting this estimate together with \eqref{eq:bounding_heps_1}, gives the result we stated as \textbf{CLAIM 2}.

Next, we remark that the terminal condition for $R_1$ is
$$
R_1\left(T,\boldsymbol{y}\right) = -\epsilon z_1\left(T,\boldsymbol{y}\right) - \epsilon^2 z_2\left(T,\boldsymbol{y}\right).
$$
Since $z_1(T,\cdot)$ is not identically vanishing but only cancels on a suitable average, we will interpret \eqref{eq:PDEforR1} in the form
$$
\partial_t R_1 + \frac{1}{\epsilon} \mathcal{L} R_1 + v = 0,
$$
for $v := g_\epsilon R_1 +\epsilon^2 h_\epsilon,$ wherefrom the Feynman-Kac Theorem gives the representation
\begin{align} \label{eq:KeyRepresentationOfR1}
    \begin{split}
        R_1\left( t,\boldsymbol{y} \right) &= \mathbb{E}_{t,\boldsymbol{y}}\left[ -\epsilon z_1\left(T,\boldsymbol{y}_T\right) - \epsilon^2 z_2\left(T,\boldsymbol{y}_T\right) + \int_t^T \left\{ g_\epsilon\left(u,\boldsymbol{y}_u\right) R_1\left(u,\boldsymbol{y}_u\right) +\epsilon^2 h_\epsilon\left(u,\boldsymbol{y}_u\right)\right\}\,du \right].
    \end{split}
\end{align}
Above, we have written $\left\{ \boldsymbol{y}_\tau \equiv \boldsymbol{y}^\epsilon_\tau \right\}_\tau,$ omitting the superscript $\epsilon,$ and will continue to do so until the end of the proof. 

By the centering of $z_1(T,\cdot),$ i.e., $\left\langle z_1(T,\cdot) \right\rangle = 0,$ alongside the fact that it belongs to $\mathcal{D}\left(\mathcal{L}\right)$ and that it has at most polynomial growth, we can apply our assumption \eqref{eq:polynomial_condn_asymptotic_avg} to obtain 
\begin{equation} \label{eq:key_bound_on_z1T}
    \left| \mathbb{E}_{t,\boldsymbol{y}}\left[ z_1\left(T,\boldsymbol{y}_T\right) \right] \right| \leqslant C\left(1+\left|\boldsymbol{y}\right|^n \right) e^{-\frac{a}{\epsilon}(T-t)},
\end{equation}
for the spectral gap $a > 0$ of $\mathcal{L},$ cf. Remarks \ref{rem:expression_eigen} and \ref{rem:expectation_centered_bdd_eps}.   

For $p\geqslant 1$ and $0 \leqslant t_0 < T,$ we write
$$
r_{p,\,t_0}(t) := \mathbb{E}_{t_0,\boldsymbol{y}}\left[ \left| R_1\left(t,\boldsymbol{y}_t\right) \right|^p \right]^{1/p}.
$$
In the subsequent estimates, the generic constant $C$ may depend on $p$ and $\boldsymbol{y}$ (through time-uniform bounds on the moments of the path $\left\{ \boldsymbol{y}_t \right\},\,\boldsymbol{y}_0=\boldsymbol{y},$ cf. (\textbf{H4})). 

Now, from \eqref{eq:KeyRepresentationOfR1}, we use the the Claims 1 and 2 we proved above, Eqs. \eqref{eq:z2_poly} and \eqref{eq:key_bound_on_z1T},  the Tower Property of conditional expectations, the H\"older Inequality (writing $p^\prime = p/(p-1)$), and Fubini's Theorem --- whenever it is convenient --- as well as (\textbf{H4}), to derive for $t \in \left[t_0,T \right[$:
\begin{align*}
    r_{p,\,t_0}(t)^p &\leqslant\, C\epsilon^p \mathbb{E}_{t_0,\,\boldsymbol{y}}\left[\left| \mathbb{E}_{t,\boldsymbol{y}_t}\left[ z_1\left(T,\boldsymbol{y}_T\right) \right] \right|^p\right]+ C\epsilon^{2p} \mathbb{E}_{t_0,\,\boldsymbol{y}}\left[ \left| z_2\left(T,\boldsymbol{y}_T\right) \right|^p\right] \\
    &\hspace{0.5cm}+ C\left\{ \int_t^T \mathbb{E}_{t_0,\,\boldsymbol{y}} \left[ | g_\epsilon(u,\,\boldsymbol{y}_u) |^{p^\prime} \,du \right]\right\}^{p/p^\prime}  \int_t^T \mathbb{E}_{t_0,\,\boldsymbol{y}} \left[ \left| R_1(u,\,\boldsymbol{y}_u) \right|^p \right]\, du  \\
    &\hspace{0.5cm}+C\epsilon^{2p}\mathbb{E}_{t_0,\,\boldsymbol{y}}\left[ \int_t^T | h_\epsilon(u,\,\boldsymbol{y}_u) |^p \, du  \right]  \\
    &\leqslant C\epsilon^p e^{-\frac{p\,a}{\epsilon}(T-t)}\left( 1+\sup_{\tau \geqslant 0} \mathbb{E}_{0,\,\boldsymbol{y}}\left[\left| \boldsymbol{y}^1_\tau \right|^n \right] \right) + C\epsilon^{2p}\left( 1+\sup_{\tau \geqslant 0} \mathbb{E}_{0,\,\boldsymbol{y}}\left[\left| \boldsymbol{y}^1_\tau \right|^n \right] \right) \\
    &\hspace{0.5cm}+ C \left( 1+ \sup_{\tau \geqslant 0} \mathbb{E}_{0,\,\boldsymbol{y}} \left[\left| \boldsymbol{y}^1_\tau \right|^n \right] \right)\int_t^T r_{p,\,t_0}(u)^p\,du + C\epsilon^{2p}\left( 1+ \sup_{\tau \geqslant 0} \mathbb{E}_{0,\,\boldsymbol{y}} \left[ \left| \boldsymbol{y}^1_\tau \right|^n\right] \right) \\
    &\leqslant C\epsilon^p e^{-\frac{p\,a}{\epsilon}(T-t)} + C\epsilon^{2p} + C\int_t^T r_{p,\,t_0}(u)^p\,du.
\end{align*}
From these estimates, we employ Gronwall's Lemma and take $1/p-$th powers --- using their monotonicity and subadditivity --- to deduce
\begin{equation} \label{eq:final_estimate_r}
    r_{p,\,t_0}(t) \leqslant C\epsilon e^{-\frac{a}{\epsilon}(T-t)} + C\epsilon^{2}.
\end{equation}
By multiplying \eqref{eq:final_estimate_r} by $(T-t)$ and using that 
$$
(T-t)e^{-\frac{a}{\epsilon}(T-t)} \leqslant \frac{\epsilon}{a},
$$
we obtain \eqref{eq:accuracy_ord1_uniftime}. Alternatively, by integrating \eqref{eq:final_estimate_r} with respect to time, we get \eqref{eq:accuracy_ord1_nonblowup}. This finishes the proof of the Theorem.
\end{proof}

Consequently, we get from Theorem \ref{thm:AsymptoticnessOrd1} (specifically, from \eqref{eq:accuracy_ord1_uniftime})  the following pointwise result away from the boundary.

\begin{corollary}
    Given $\delta > 0$ and $(t,\,\boldsymbol{y}) \in \left[0,T-\delta\right]\times \mathbb{R}^d,$ we have
    $$
    z(t,\,\boldsymbol{y}) = \overline{z}_1(t,\,\boldsymbol{y}) + O_{\delta}\left(\epsilon^2\right),\, \text{ as } \epsilon \downarrow 0.
    $$
\end{corollary}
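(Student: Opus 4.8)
The plan is to derive this pointwise estimate directly from the inequality \eqref{eq:accuracy_ord1_uniftime} of Theorem \ref{thm:AsymptoticnessOrd1}, exploiting the freedom in the choice of the initial time $t_0$ appearing there. The crucial observation is that, upon setting $t_0$ equal to the very point $t$ at which we wish to control the error, the conditional expectation $\mathbb{E}_{t_0,\boldsymbol{y}}\left[\cdot\right]$ evaluated along the path at time $t = t_0$ degenerates: under the conditioning $\boldsymbol{y}^\epsilon_{t_0} = \boldsymbol{y}$ we have $\boldsymbol{y}^\epsilon_{t_0} = \boldsymbol{y}$ almost surely, so the random quantity $z(t_0,\boldsymbol{y}^\epsilon_{t_0}) - \overline{z}_1(t_0,\boldsymbol{y}^\epsilon_{t_0})$ collapses to the deterministic value $z(t,\boldsymbol{y}) - \overline{z}_1(t,\boldsymbol{y}).$

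Concretely, I would fix $(t,\boldsymbol{y}) \in \left[0,T-\delta\right]\times\mathbb{R}^d$ and invoke Theorem \ref{thm:AsymptoticnessOrd1} (under its standing hypotheses, in particular \eqref{eq:polynomial_condn_asymptotic_avg}) with this $t$ playing the role of $t_0$ and with $p=1$; any $p \geqslant 1$ would serve equally well. The supremum in \eqref{eq:accuracy_ord1_uniftime} is taken over all times in $\left[t_0,T\right],$ so it in particular dominates the single summand at the endpoint $\tau = t_0.$ Retaining only that term and using the degeneracy just noted yields
\begin{equation*}
    (T-t)\left| z(t,\boldsymbol{y}) - \overline{z}_1(t,\boldsymbol{y}) \right| \leqslant \sup_{t \leqslant \tau \leqslant T}\left\{ (T-\tau)\,\mathbb{E}_{t,\boldsymbol{y}}\left[ \left| z(\tau,\boldsymbol{y}^\epsilon_\tau) - \overline{z}_1(\tau,\boldsymbol{y}^\epsilon_\tau) \right| \right] \right\} = O(\epsilon^2).
\end{equation*}

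It then remains only to divide through by $T-t.$ Since $t \leqslant T - \delta,$ we have $T - t \geqslant \delta > 0,$ so the factor $1/(T-t)$ is bounded by $1/\delta;$ absorbing it into the generic constant converts the $O(\epsilon^2)$ above into $O_\delta(\epsilon^2),$ which is exactly the claimed asymptotics. I do not expect any genuine obstacle here: the whole argument is the remark that evaluating at the initial instant turns the expectation into a point value, after which the weight $(T-t)$ in \eqref{eq:accuracy_ord1_uniftime} --- inserted precisely to tame the terminal-layer blow-up --- becomes harmless away from $T$ and is simply divided out at the cost of a $\delta$-dependent constant. The only step warranting a line of care is the bookkeeping of constants: one must check that the $\boldsymbol{y}$-dependence already permitted in the big-$O$ of Theorem \ref{thm:AsymptoticnessOrd1} is carried through unchanged, and that the extra $1/\delta$ factor is correctly recorded in the subscript of $O_\delta.$
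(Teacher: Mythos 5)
Your argument is correct and is precisely the deduction the paper intends: it presents the corollary as an immediate consequence of \eqref{eq:accuracy_ord1_uniftime}, obtained by taking $t_0=t$ so that the conditional expectation at the initial instant collapses to the point value, and then dividing out the weight $T-t\geqslant\delta$. The bookkeeping you flag (the $\boldsymbol{y}$-dependence already allowed in the theorem's big-$O$, plus the extra $1/\delta$) is exactly what the subscript in $O_\delta$ records.
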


\section{Conclusions} \label{sec:conclusions}

We investigated the optimal execution problem under a framework where both liquidity and volatility are stochastic. We modeled the uncertainty in the aspects by stipulating that a multi-dimensional Markovian factor drives them. Building upon previous results on this setting, we further proposed to assume that this stochastic driver is a fast mean-reverting process. We used the resulting ergodicity of the system to obtain approximations to the optimal strategy, which yielded insightful strategies that are suited to the types of markets we studied --- particularly, for highly volatile financial products.

Firstly, we derived the leading-order approximation. We then assessed the strategy we obtained by carrying out $10000$ Monte Carlo simulations. We observed the desired effect in the strategy's behavior relative to the risk aversion parameter. It indicated that some positive risk aversion level might be beneficial to the trader, even from the pure viewpoint of wealth maximization. Subsequently, we derived the first-order correction to the leading-order approximation. It included a boundary layer term due to the problem's singularity at the terminal time (mitigated by assuming a finite terminal inventory penalization, which regularizes the problem). We computed the first-order correction via an iterative numerical scheme, and we have seen that it is indeed a consistent --- although marginal --- improvement of the leading-order approximation. 

Finally, we concluded the work by establishing some accuracy results concerning our approximations. We used appropriate representations of the solutions of the HJB via the Feynman-Kac Theorem in such a way that we were able to derive conclusions regarding the asymptotics of our approximations. For the leading-order one, we provided pointwise results rather directly. For the first-order correction, the analysis was more subtle. In this instance, we were able to identify suitable topologies for which the desired accuracy results hold, from where we obtained a similar pointwise consequence.

\section*{Acknowlegements}

YT was financed in part by Coordena\c{c}\~ao de Aperfei\c{c}oamento de Pessoal de N\'ivel Superior - Brasil (CAPES) - Finance code 001.


\bibliographystyle{apalike}
\bibliography{References}

\appendix
\section{On the numerical computation of the first-order correction} \label{app:LeadingOrder}

We circumvent the issue that zero is an eigenvalue of $\mathcal{L}$ by taking $\eta > 0,$ in such a way that $\eta$ does not belong to the spectrum of $\mathcal{L}.$ From now on, we suppose that for each sufficiently small such $\eta,$ the operator $\mathcal{L} + \eta I$ is not only injective but that its image contains all continuous functions with at most polynomial growth. We propose the following algorithm to compute the\footnote{The fact that there is at most one follows from (\textbf{H3}) and the zero average condition.} solution $\varphi$ to the PDE
\begin{equation} \label{eq:PoissonAvgZero}
    \begin{cases}
        \mathcal{L}\varphi = f \text{ in } \mathbb{R}^d,\\
        \left\langle \varphi \right\rangle = 0, 
    \end{cases}
\end{equation}
where $f \in C(\mathbb{R}^d)$ is centered and at most polynomially growing:

\begin{algorithm}[H] \label{algo:NumericalAlgo}
\SetAlgoLined
\KwResult{Numerical solution of \eqref{eq:PoissonAvgZero}}
 Initialize $\varphi^0 \equiv 0,$ $k=0,$ the error variable $\epsilon,$ and stipulate the tolerance $\epsilon_0$\;
 \While{$\epsilon \geqslant \epsilon_0$}{
    $1.$ Solve 
    $$
    \mathcal{L}\varphi^{k+1} + \eta \varphi^{k+1} = f + \eta \varphi^k \text{ in } \mathbb{R}^d,
    $$\\
    $2.$ Update $\epsilon;$\\
    $3.$ $k \gets k+1.$
 }
 \Return{$\varphi^k$}
 \caption{Iterative numerical algorithm for solving the PDE \eqref{eq:PoissonAvgZero}.}
\end{algorithm}

\begin{lemma}\label{lem:Poincare}
Let us assume that $d=2,$ that the parameters of the model are as in Subsection \ref{subsec:Numerics1} --- in particular, we remark that $\Pi$ is given by \eqref{eq:Pi2DOUcase}. Then, for every $f \in L^2(\Pi)$ such that $\partial_{\boldsymbol{y}_1}f,\,\partial_{\boldsymbol{y}_2}f \in L^2(\Pi)$ and $\left\langle f \right\rangle = 0,$ we have
$$
\left\langle f^2 \right\rangle \leqslant C \left\langle |\nabla f |^2\right\rangle,
$$
where $C$ is independent of $f.$
\end{lemma}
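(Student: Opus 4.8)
The plan is to reduce the inequality to the classical spectral-gap (Poincar\'e) inequality for the standard Gaussian measure on $\mathbb{R}^2$ by means of an affine change of variables that maps $\Pi$ to $N(0, I_2)$. Since $\boldsymbol{\eta}$ is invertible and the diagonal matrix $\boldsymbol{\Lambda}$ has positive entries, the Lyapunov equation $\boldsymbol{\Lambda}\boldsymbol{A} + \boldsymbol{A}\boldsymbol{\Lambda} = \boldsymbol{\eta}\boldsymbol{\eta}^\intercal$ (or the explicit form \eqref{eq:MatrixA} together with $|\rho| < 1$) guarantees that $\boldsymbol{A}$ is symmetric positive definite; hence its symmetric square root $\boldsymbol{A}^{1/2}$ exists and is invertible. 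Let $\gamma$ denote the standard Gaussian on $\mathbb{R}^2$ and consider the affine map $T(\boldsymbol{z}) := \boldsymbol{m} + \boldsymbol{A}^{1/2}\boldsymbol{z}$, which pushes $\gamma$ forward to $\Pi$ in view of \eqref{eq:Pi2DOUcase}.

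First I would set $g := f\circ T$. The change of variables gives $\langle f \rangle = \int_{\mathbb{R}^2} g\,d\gamma$ and $\langle f^2 \rangle = \int_{\mathbb{R}^2} g^2\,d\gamma$, so the hypothesis $\langle f \rangle = 0$ turns the left-hand side into the variance $\mathrm{Var}_\gamma(g) = \int g^2\,d\gamma$. By the chain rule, $\nabla g = \boldsymbol{A}^{1/2}\,(\nabla f)\circ T$ (using that $\boldsymbol{A}^{1/2}$ is symmetric), whence the pointwise bound $|\nabla g|^2 \leqslant \|\boldsymbol{A}^{1/2}\|_{\mathrm{op}}^2\,|(\nabla f)\circ T|^2 = \lambda_{\max}(\boldsymbol{A})\,|(\nabla f)\circ T|^2$. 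Integrating against $\gamma$ and undoing the change of variables yields $\int |\nabla g|^2\,d\gamma \leqslant \lambda_{\max}(\boldsymbol{A})\,\langle |\nabla f|^2 \rangle$.

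The heart of the matter is then the Poincar\'e inequality for $\gamma$ itself, namely $\mathrm{Var}_\gamma(g) \leqslant \int |\nabla g|^2\,d\gamma$. I would establish this by tensorization from the one-dimensional case: for the one-dimensional standard Gaussian the Hermite polynomials form an orthonormal basis of $L^2$ that diagonalizes the associated Ornstein--Uhlenbeck generator with eigenvalues $0, -1, -2, \ldots$, so the spectral gap equals $1$ and the one-dimensional Poincar\'e inequality holds with constant $1$; the standard tensorization argument for product measures then upgrades this to $\mathbb{R}^2$ with the same constant. Combining this with the previous paragraph gives $\langle f^2 \rangle \leqslant \lambda_{\max}(\boldsymbol{A})\,\langle |\nabla f|^2 \rangle$, i.e.\ the claim with $C = \lambda_{\max}(\boldsymbol{A})$, which is manifestly independent of $f$.

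The main obstacle is the regularity bookkeeping rather than the inequality itself. The clean statements of both the Gaussian Poincar\'e inequality and the chain-rule identity $\nabla g = \boldsymbol{A}^{1/2}(\nabla f)\circ T$ are most transparent for smooth functions, whereas the hypothesis only furnishes $f$ in the weighted Sobolev space with $\partial_{\boldsymbol{y}_1}f,\,\partial_{\boldsymbol{y}_2}f \in L^2(\Pi)$. I would therefore carry out the argument first for smooth $f$ of at most polynomial growth (a core that is dense in this weighted Sobolev space) and then pass to the limit, checking that both sides of the inequality are continuous with respect to the Sobolev norm so that the bound survives the approximation. The positive definiteness of $\boldsymbol{A}$ ensures the weight in $\Pi$ and the flat Lebesgue weight are mutually comparable on compact sets, which is what renders this density step routine.
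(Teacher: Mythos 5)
Your proposal is correct and follows essentially the same route as the paper: an affine change of variables $\boldsymbol{y} = \boldsymbol{A}^{1/2}\boldsymbol{x} + \boldsymbol{m}$ (the paper writes $\boldsymbol{S}$ for the symmetric square root of $\boldsymbol{A}$) reducing the claim to the spectral-gap inequality for the standard Gaussian, with the constant absorbed through $|\boldsymbol{S}\nabla f|^2 \leqslant C|\nabla f|^2$. The only cosmetic difference is that the paper cites Beckner's inequality for the Gaussian Poincar\'e step, whereas you re-derive it via Hermite polynomials and tensorization.
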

\begin{proof}
In effect, since the matrix $\boldsymbol{A}$ we defined in Eq. \eqref{eq:MatrixA} is symmetric and positive definite, we can find a symmetric invertible $\boldsymbol{S} \in \mathbb{R}^{2\times 2}$ such that $\boldsymbol{S}^2 = \boldsymbol{A}.$ From Beckner's inequality for Gaussian measures, cf. \cite[Theorem 1, inequality (2)]{beckner1989generalized}, we know that
\begin{equation} \label{eq:BecknerIneq}
    \int_{\mathbb{R}^d} |g(\boldsymbol{x})|^2\frac{e^{-\frac{|\boldsymbol{x}|^2}{2}}}{2\pi}\,d\boldsymbol{x} \leqslant \int_{\mathbb{R}^d} |\nabla g(\boldsymbol{x})|^2\frac{e^{-\frac{|\boldsymbol{x}|^2}{2}}}{2\pi}\,d\boldsymbol{x},
\end{equation}
for every $g,\,\partial_{\boldsymbol{x}_1}g,\,\partial_{\boldsymbol{x}_2}g \in L^2(\Pi)$ and $\int_{\mathbb{R}^2}g(\boldsymbol{x})\frac{e^{-\frac{|\boldsymbol{x}|^2}{2}}}{2\pi}\,d\boldsymbol{x} = 0.$ Given $f$ as in the statement of the current Lemma, let us set $g(\boldsymbol{x}) := f\left( \boldsymbol{S} \boldsymbol{x} + \boldsymbol{m} \right).$ We notice that $g$ meets the conditions for us to apply inequality \eqref{eq:BecknerIneq}; the change of variables $\boldsymbol{y} = \boldsymbol{S}\boldsymbol{x} + \boldsymbol{m}$ yields
\begin{align*}
    \left\langle f^2 \right\rangle &= \int_{\mathbb{R}^d} |g\left(\boldsymbol{S}^{-1}\left(\boldsymbol{y}-\boldsymbol{m}\right)\right)|^2\Pi(\boldsymbol{y})\,\,d\boldsymbol{y} \\
    &= \int_{\mathbb{R}^d} |g(\boldsymbol{x})|^2\frac{e^{-\frac{|\boldsymbol{x}|^2}{2}}}{2\pi}\,d\boldsymbol{x} \\
    &\leqslant \int_{\mathbb{R}^d} |\nabla g(\boldsymbol{x})|^2\frac{e^{-\frac{|\boldsymbol{x}|^2}{2}}}{2\pi}\,d\boldsymbol{x} \\
    &= \int_{\mathbb{R}^d} |\left(\nabla g\right)\left(\boldsymbol{S}^{-1}\left(\boldsymbol{y}-\boldsymbol{m}\right)\right)|^2\Pi(\boldsymbol{y})\,\,d\boldsymbol{y} \\
    &= \int_{\mathbb{R}^d} |\boldsymbol{S}\nabla f\left( \boldsymbol{y} \right)|^2\Pi(\boldsymbol{y})\,\,d\boldsymbol{y} \\
    &\leqslant C\left\langle \left|\nabla f \right|^2 \right\rangle,
\end{align*}
as we wanted to show.
\end{proof}

\begin{proposition} \label{prop:PoissonNumericalSolution}
Let us suppose that the assumptions of Lemma \eqref{lem:Poincare} are in force, and that, for every sufficiently small $\eta > 0$ and $f \in C\left(\mathbb{R}^2\right)$ at most polynomially growing, the equation 
$$
\mathcal{L}\varphi+\eta \varphi = f \text{ in } \mathbb{R}^2,
$$
admits a unique at most polynomially growing solution $\varphi \in \mathcal{D}\left( \mathcal{L} \right)$ whose derivatives up to order two also share this growth property. Let us denote by $\left\{\varphi^k\right\}_k$ the sequence resulting from Algorithm \ref{algo:NumericalAlgo}. Then, the following claims are true:
\begin{itemize}
    \item[(a)] The sequence $\left\{ \varphi^k \right\}_k$ is well-defined;
    \item[(b)] If $f$ is centered, then we have $\left\langle \varphi^k \right\rangle = 0,$ for every positive integer $k;$
    \item[(c)] If we fix $\eta > 0$ sufficiently small, then there exists $\varphi \in L^2(\Pi)$ such that $\varphi^k \xrightarrow{k \rightarrow \infty} \varphi$ in $L^2(\Pi)$ and almost everywhere, as $k \rightarrow \infty.$
\end{itemize}
\end{proposition}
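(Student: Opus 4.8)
The plan is to treat the iteration as a fixed-point scheme for the resolvent $T_\eta := \eta(\mathcal{L}+\eta I)^{-1}$ and to use the spectral gap, encoded by the Poincar\'e inequality of Lemma \ref{lem:Poincare}, to make $T_\eta$ a contraction on the subspace of centered functions. Throughout, write $\|g\|:=\langle g^2\rangle^{1/2}$ for the $L^2(\Pi)$ norm. Parts (a) and (b) are immediate. For (a) one argues by induction: $\varphi^0\equiv 0$ is continuous with at most polynomial growth, and if $\varphi^k$ enjoys this property then so does the source $f+\eta\varphi^k$, so the standing hypothesis produces a unique at most polynomially growing $\varphi^{k+1}\in\mathcal{D}(\mathcal{L})$ whose derivatives up to order two also grow at most polynomially. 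For (b), apply $\langle\cdot\rangle$ to the update $\mathcal{L}\varphi^{k+1}+\eta\varphi^{k+1}=f+\eta\varphi^k$; since $\Pi$ is invariant, $\langle\mathcal{L}\varphi^{k+1}\rangle=0$, and $\langle f\rangle=0$ by assumption, whence $\langle\varphi^{k+1}\rangle=\langle\varphi^k\rangle$, and $\langle\varphi^0\rangle=0$ closes the induction.

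For (c), set $e^k:=\varphi^{k+1}-\varphi^k$. Subtracting consecutive updates gives $(\mathcal{L}+\eta I)e^k=\eta e^{k-1}$, i.e.\ $e^k=T_\eta e^{k-1}$, and by (b) each $e^k$ is centered. The heart of the matter is to show that $T_\eta$ contracts centered functions for small $\eta$, and for this I would first establish a coercivity estimate for $\mathcal{L}$. Since for the diffusion generator the carr\'e-du-champ identity reads $\mathcal{L}(g^2)-2g\,\mathcal{L}g=\nabla g\cdot\boldsymbol{\eta}\boldsymbol{\eta}^\intercal\nabla g$, and $\langle\mathcal{L}(g^2)\rangle=0$ by invariance, one obtains
$$
-\langle g\,\mathcal{L}g\rangle=\tfrac12\left\langle \nabla g\cdot\boldsymbol{\eta}\boldsymbol{\eta}^\intercal\nabla g\right\rangle\geqslant \tfrac12\,\lambda_{\min}\!\left(\boldsymbol{\eta}\boldsymbol{\eta}^\intercal\right)\left\langle|\nabla g|^2\right\rangle,
$$
with $\lambda_{\min}(\boldsymbol{\eta}\boldsymbol{\eta}^\intercal)>0$ because $\boldsymbol{\eta}$ is invertible. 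Combining this with the Poincar\'e inequality $\langle g^2\rangle\leqslant C\langle|\nabla g|^2\rangle$ of Lemma \ref{lem:Poincare} yields a constant $\mu>0$ such that $-\langle g\,\mathcal{L}g\rangle\geqslant\mu\|g\|^2$ for every centered $g$.

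With coercivity in hand, the contraction follows by a direct computation. Given a centered $g$, let $h:=(\mathcal{L}+\eta I)^{-1}g$, which is centered as well, so that $T_\eta g=\eta h$ and $\mathcal{L}h+\eta h=g$. Expanding the norm,
$$
\|g\|^2=\|\mathcal{L}h\|^2+2\eta\langle (\mathcal{L}h)\,h\rangle+\eta^2\|h\|^2.
$$
Coercivity gives $-\langle (\mathcal{L}h)\,h\rangle\geqslant\mu\|h\|^2$, and Cauchy--Schwarz then forces $\|\mathcal{L}h\|\geqslant\mu\|h\|$; using $\langle (\mathcal{L}h)\,h\rangle\geqslant-\|\mathcal{L}h\|\,\|h\|$ above we get $\|g\|^2\geqslant(\|\mathcal{L}h\|-\eta\|h\|)^2\geqslant(\mu-\eta)^2\|h\|^2$ once $\eta<\mu$. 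Hence $\|T_\eta g\|=\eta\|h\|\leqslant\tfrac{\eta}{\mu-\eta}\|g\|$, so for $\eta<\mu/2$ the operator $T_\eta$ is a contraction on centered functions with factor $\rho:=\eta/(\mu-\eta)<1$. Therefore $\|e^k\|\leqslant\rho^k\|e^0\|$, the sequence $\{\varphi^k\}$ is Cauchy in $L^2(\Pi)$ and converges to some $\varphi\in L^2(\Pi)$, and $\sum_k\|e^k\|<\infty$ makes $\sum_k|e^k|$ lie in $L^2(\Pi)$, hence finite almost everywhere, which delivers the a.e.\ convergence.

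The step I expect to be the main obstacle is the coercivity, not the fixed-point bookkeeping: one must justify the integration by parts behind the carr\'e-du-champ identity and the vanishing of $\langle\mathcal{L}(g^2)\rangle$, both of which rest on the polynomial growth of $g$ and its first two derivatives against the Gaussian weight $\Pi$ (supplied by the regularity built into the standing hypothesis on $\mathcal{L}+\eta I$), and one must convert the Poincar\'e inequality --- a statement about the symmetric Dirichlet form --- into coercivity for the possibly non-self-adjoint $\mathcal{L}$; the carr\'e-du-champ computation does exactly this, since it only sees the symmetric part of $\mathcal{L}$. A secondary bookkeeping point is to record that $h$ inherits the regularity needed to apply these identities, which is precisely the well-posedness assumed for $\mathcal{L}+\eta I$.
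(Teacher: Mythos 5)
Your proposal is correct and follows essentially the same route as the paper: parts (a) and (b) are handled identically, and for (c) both arguments rest on the same coercivity estimate $-\langle g\,\mathcal{L}g\rangle\geqslant \mu\langle g^2\rangle$ for centered $g$ (obtained from the carr\'e-du-champ/integration-by-parts identity combined with the Poincar\'e inequality of Lemma \ref{lem:Poincare}) and deduce geometric decay of the increments $e^k$. The only difference is cosmetic --- you extract the contraction factor $\eta/(\mu-\eta)$ from a resolvent-norm expansion of $\|(\mathcal{L}+\eta I)h\|^2$, whereas the paper multiplies the error equation by $-e^{k+1}\Pi$ and applies Young's inequality to the cross term, arriving at the analogous factor $\eta/(2C-3\eta)$.
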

\begin{proof}
Proving $(a)$ by induction is straightforward. To prove $(b),$ it suffices to notice that 
$$
\eta\left\langle \varphi^k \right\rangle = \left\langle f + \eta\varphi^k \right\rangle = \left\langle \mathcal{L}\varphi^{k+1} + \eta \varphi^{k+1}\right\rangle = \eta \left\langle \varphi^{k+1} \right\rangle,
$$
since $\left\langle f \right\rangle = 0.$ Recalling that $\varphi^0 \equiv 0,$ the above relations inductively imply what we stated in $(b).$ Now, let us show $(c).$ We write $e^{k} := \varphi^k - \varphi^{k-1},$ for positive $k.$ It follows that
\begin{equation} \label{eq:Error_k_PDE}
    \mathcal{L} e^{k+1} + \eta e^{k+1} = \eta e^k \text{ in } \mathbb{R}^d.
\end{equation}
We claim that 
\begin{equation} \label{eq:EnergyEstimate_Error_k}
    - \left\langle e^{k+1} \mathcal{L}e^{k+1} \right\rangle \geqslant C \left\langle \left( e^{k+1} \right)^2\right\rangle 
\end{equation}
In effect, carrying out some integration by parts, we have that
\begin{align*}
    - \left\langle e^{k+1} \mathcal{L}e^{k+1} \right\rangle =& \int_{\mathbb{R}^2} \left[ \frac{\sigma^2_1}{2}\left(\partial_{\boldsymbol{y}_1}e^{k+1}\right)^2 + \rho \sigma_1\sigma_2 \partial_{\boldsymbol{y}_1}e^{k+1}\partial_{\boldsymbol{y}_2}e^{k+1} + \frac{\sigma^2_2}{2}\left(\partial_{\boldsymbol{y}_2}e^{k+1}\right)^2 \right]\Pi\,\,d\boldsymbol{y} \\
    &+ \int_{\mathbb{R}^2} \left\{ \frac{\sigma^2_1}{2}\partial_{\boldsymbol{y}_1}\left[ \frac{\left( e^{k+1} \right)^2}{2} \right] \partial_{\boldsymbol{y}_1}\Pi  + \frac{\sigma^2_2}{2}\partial_{\boldsymbol{y}_2}\left[ \frac{\left( e^{k+1} \right)^2}{2} \right] \partial_{\boldsymbol{y}_2}\Pi + \rho \sigma_1\sigma_2\partial_{\boldsymbol{y}_1}\left[\frac{\left( e^{k+1} \right)^2}{2}\right]\partial_{\boldsymbol{y}_2}\Pi \right\}\,\,d\boldsymbol{y}\\
    &+ \int_{\mathbb{R}^2} \left\{ \lambda_1 (m_1-\boldsymbol{y}_1) \partial_{\boldsymbol{y}_1}\left[\frac{\left( e^{k+1} \right)^2}{2}\right] + \lambda_2 (m_2-\boldsymbol{y}_2)\partial_{\boldsymbol{y}_2}\left[\frac{\left( e^{k+1} \right)^2}{2}\right] \right\}\Pi\,\,d\boldsymbol{y} \\
    =& - \int_{\mathbb{R}^2} \frac{\left( e^{k+1} \right)^2}{2} \mathcal{L}^* \Pi \,d\boldsymbol{y} \\
    &+ \int_{\mathbb{R}^2} \left[ \frac{\sigma^2_1}{2}\left( \partial_{\boldsymbol{y}_1}e^{k+1}\right)^2 + \frac{\sigma^2_2}{2}\left( \partial_{\boldsymbol{y}_2}e^{k+1}\right)^2 + \rho \sigma_1\sigma_2 \partial_{\boldsymbol{y}_1}e^{k+1} \partial_{\boldsymbol{y}_2}e^{k+1} \right] \Pi \,\,d\boldsymbol{y} \\
    \geqslant& \frac{1}{2}\min\left(\sigma_1^2,\sigma^2_2\right)(1-\rho)\left\langle \left|\nabla e^{k+1} \right|^2 \right\rangle.
\end{align*}
We emphasize that the boundary terms all vanish due to the fact that $e^{k+1}$ and its derivatives are at most polynomially growing, whereas $\Pi$ is in the Schwartz class. Moreover, from Lemma \ref{lem:Poincare}, we have the Poincar\'e inequality for our Gaussian weight:
\begin{equation} \label{eq:PoincareGaussian}
\left\langle \left|\nabla e^{k+1} \right|^2 \right\rangle \geqslant C \left\langle \left| e^{k+1} \right|^2 \right\rangle
\end{equation}

We multiply Eq. \eqref{eq:Error_k_PDE} by $-e^{k+1}\Pi$ and integrate over $\mathbb{R}^2$ to obtain
\begin{equation} \label{eq:FirstEq_Error_k}
    -\left\langle e^{k+1}\mathcal{L}e^{k+1} \right\rangle - \eta \left\langle \left(e^{k+1}\right)^2\right\rangle = -\eta\left\langle e^k e^{k+1} \right\rangle.
\end{equation}
Using Eq. \eqref{eq:EnergyEstimate_Error_k} in Eq. \eqref{eq:FirstEq_Error_k}, as well as Young's inequality, we derive
$$
\left( C - \eta \right)\left\langle \left(e^{k+1}\right)^2\right\rangle \leqslant -\eta\left\langle e^k e^{k+1} \right\rangle \leqslant \frac{\eta}{2}\left(\left\langle \left(e^{k+1}\right)^2\right\rangle + \left\langle \left(e^{k}\right)^2\right\rangle \right),
$$
whence
\begin{equation} \label{eq:ConcludingEq_Error_k}
    \left\langle \left(e^{k+1}\right)^2\right\rangle \leqslant \frac{\eta}{2C - 3\eta}\left\langle \left(e^{k}\right)^2\right\rangle.
\end{equation}
For sufficiently small $\eta > 0,$ viz., $\eta < C/2,$ Eq. \eqref{eq:ConcludingEq_Error_k} allows us to conclude that 
$$
\sum_k \left\langle \left(e^{k}\right)^2\right\rangle < \infty,
$$
thus finishing the proof of item $(c).$  
\end{proof}

\begin{corollary}
The function $\varphi$ we identified in Propositon \ref{prop:PoissonNumericalSolution} satisfies $\nabla \varphi \in L^2(\Pi),\,\left\langle \varphi \right\rangle =0,$ and solves \eqref{eq:PoissonAvgZero} distributionally.
\end{corollary}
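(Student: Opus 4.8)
The plan is to obtain all three properties by passing to the limit along the sequence $\left\{\varphi^k\right\}_k$ produced by Algorithm \ref{algo:NumericalAlgo}, exploiting the two modes of convergence established in Proposition \ref{prop:PoissonNumericalSolution}(c), namely $\varphi^k \to \varphi$ in $L^2(\Pi)$ and almost everywhere. First I would dispose of the centering condition $\left\langle \varphi \right\rangle = 0$, which is the softest point: since $\Pi$ is a probability measure, Cauchy--Schwarz gives $\|\cdot\|_{L^1(\Pi)} \leqslant \|\cdot\|_{L^2(\Pi)}$, so the averaging functional $g \mapsto \left\langle g \right\rangle$ is continuous on $L^2(\Pi)$; as $\left\langle \varphi^k \right\rangle = 0$ for every $k$ by part (b), letting $k \to \infty$ yields $\left\langle \varphi \right\rangle = 0$.

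The bulk of the work is the gradient bound. Here I would reuse the energy computation carried out in the proof of Proposition \ref{prop:PoissonNumericalSolution}, where, before the Poincar\'e inequality \eqref{eq:PoincareGaussian} is invoked, one actually bounds $-\left\langle e^{k+1}\mathcal{L}e^{k+1} \right\rangle$ below by a multiple of $\left\langle \left| \nabla e^{k+1} \right|^2 \right\rangle$, with $e^k := \varphi^k - \varphi^{k-1}$. Combining this with the identity \eqref{eq:FirstEq_Error_k} and Cauchy--Schwarz gives
$$
c\,\left\langle \left| \nabla e^{k+1} \right|^2 \right\rangle \leqslant -\left\langle e^{k+1}\mathcal{L}e^{k+1} \right\rangle = \eta\left\langle \left(e^{k+1}\right)^2 \right\rangle - \eta\left\langle e^k e^{k+1} \right\rangle \leqslant \eta\left\langle \left(e^{k+1}\right)^2 \right\rangle + \eta\,\|e^k\|_{L^2(\Pi)}\,\|e^{k+1}\|_{L^2(\Pi)}
$$
for a coercivity constant $c>0$. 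Since \eqref{eq:ConcludingEq_Error_k} forces the $\left\langle \left(e^k\right)^2 \right\rangle$ to decay geometrically in $k$, the right-hand side decays geometrically as well, so $\|\nabla e^{k+1}\|_{L^2(\Pi)}$ is summable; using $\varphi^0 \equiv 0$ to telescope $\nabla\varphi^k = \sum_{j\leqslant k}\nabla e^j$, I conclude that $\left\{ \nabla\varphi^k \right\}_k$ converges in $L^2(\Pi;\mathbb{R}^2)$ to some $G$. It then remains to identify $G$ with the distributional gradient of $\varphi$: because the Gaussian density of $\Pi$ is continuous and strictly positive (hence bounded below on compacts), $L^2(\Pi)$-convergence implies $L^2_{\mathrm{loc}}(d\boldsymbol{y})$-convergence, so both $\varphi^k \to \varphi$ and $\nabla\varphi^k \to G$ hold locally in $L^2$, and testing against $C_c^\infty$ functions yields $G = \nabla\varphi \in L^2(\Pi)$.

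Finally, for the distributional equation I would rewrite each iterate as the classical identity $\mathcal{L}\varphi^{k+1} = f + \eta\left( \varphi^k - \varphi^{k+1} \right)$, valid since the iterates are $C^2$ with polynomial growth. The right-hand side converges to $f$ in $L^2(\Pi)$, hence in $L^2_{\mathrm{loc}}$ and a fortiori in $\mathcal{D}'\left(\mathbb{R}^2\right)$; and since $\mathcal{L}$ is a differential operator with smooth (constant and affine) coefficients, it is sequentially continuous on $\mathcal{D}'$, so $\varphi^{k+1} \to \varphi$ in $\mathcal{D}'$ forces $\mathcal{L}\varphi^{k+1} \to \mathcal{L}\varphi$ there. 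Passing to the limit gives $\mathcal{L}\varphi = f$ distributionally, which together with $\left\langle \varphi \right\rangle = 0$ is exactly \eqref{eq:PoissonAvgZero}. I expect the only genuine obstacle to be the gradient step --- specifically the passage from the weighted $L^2(\Pi)$-limit $G$ to the unweighted distributional gradient --- which is where the continuity and strict positivity of the Gaussian density must be used; everything else is a routine limiting argument.
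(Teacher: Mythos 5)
Your proposal is correct, and on the two soft points (the centering $\left\langle \varphi \right\rangle = 0$ via continuity of $g \mapsto \left\langle g \right\rangle$ on $L^2(\Pi)$, and the identification of the limiting gradient by testing against $C^\infty_c$ functions, using that the Gaussian density is locally bounded below) it coincides with what the paper does. The genuine difference is in how you produce the limit gradient. The paper applies the energy identity to the iterates $\varphi^{k+1}$ themselves, obtaining only the uniform bound $\sup_k \left\langle \left| \nabla\varphi^{k+1} \right|^2 \right\rangle < \infty$, and then extracts a \emph{weakly} convergent subsequence in $L^2(\Pi)^2$ whose limit is identified with $\nabla\varphi$. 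You instead apply the same coercivity bound to the increments $e^{k+1} = \varphi^{k+1}-\varphi^{k}$, combine it with \eqref{eq:FirstEq_Error_k} and the geometric decay \eqref{eq:ConcludingEq_Error_k}, and conclude that $\left\| \nabla e^{k+1} \right\|_{L^2(\Pi)}$ is geometrically summable, so that the \emph{whole} sequence $\left\{ \nabla\varphi^k \right\}_k$ converges strongly in $L^2(\Pi)^2$. This is a slightly sharper conclusion obtained at essentially no extra cost (you only need $e^1 = \varphi^1 \in L^2(\Pi)$, which follows from its polynomial growth against the Gaussian weight), and it dispenses with the weak-compactness/subsequence step. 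Finally, for the distributional equation the paper merely asserts the passage to the limit is straightforward; your argument --- rewriting the iteration as $\mathcal{L}\varphi^{k+1} = f + \eta\left( \varphi^k - \varphi^{k+1} \right)$, noting the right-hand side tends to $f$ in $L^2_{\mathrm{loc}}$ and that $\mathcal{L}$, having smooth coefficients, is sequentially continuous on $\mathcal{D}'\left(\mathbb{R}^2\right)$ --- is a clean and complete way to supply the omitted details.
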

\begin{proof}
Let us notice that
$$
-\left\langle \varphi^{k+1}\mathcal{L}\varphi^{k+1} + \eta \left( \varphi^{k+1} \right)^2 \right\rangle = -\left\langle \varphi^{k+1} f + \varphi^{k+1}\varphi^k \right\rangle.
$$
On the one hand, proceeding in a similar way as we did in the proof of Proposition \ref{prop:PoissonNumericalSolution}, derive
\begin{equation} \label{eq:weakSoln_pt1}
    -\left\langle \varphi^{k+1}\mathcal{L}\varphi^{k+1} + \eta \left( \varphi^{k+1} \right)^2 \right\rangle \geqslant C \left\langle \left|\nabla \varphi^{k+1}\right|^2 \right\rangle - \eta \left\langle \left(\varphi^{k+1} \right)^2\right\rangle. 
\end{equation}
On the other hand,
\begin{equation} \label{eq:weakSoln_pt2}
    -\left\langle \varphi^{k+1} f + \varphi^{k+1}\varphi^k \right\rangle \leqslant C\left( \left\langle \left(\varphi^{k+1} \right)^2\right\rangle + \left\langle \left(\varphi^{k} \right)^2\right\rangle + \left\langle \left(f\right)^2\right\rangle \right) .
\end{equation}
Putting \eqref{eq:weakSoln_pt1} and \eqref{eq:weakSoln_pt2} together, and having in mind that
$$
\sup_k\left\langle \left(\varphi^{k} \right)^2\right\rangle < \infty,
$$
we infer that
$$
\sup_k \left\langle \left|\nabla \varphi^{k+1}\right|^2 \right\rangle < \infty.
$$
Thus, up to a subsequential refinement, $\left\{ \nabla \varphi^k \right\}_k$ converges weakly in the Hilbert space $L^2(\Pi)^2;$ hence, its limit must be equal to $\nabla\varphi$ distributionally. Indeed, let us fix $i \in \left\{1,2\right\}.$ On the one hand, the weak convergence $\nabla\varphi^k \rightharpoonup (\zeta^1,\,\zeta^2)$ in $L^2(\Pi)^2$ implies, for $\psi \in C^\infty_c(\mathbb{R}^2),$ the relation
\begin{equation} \label{eq:WeakDeriv1}
    \int_{\mathbb{R}^2} \partial_{\boldsymbol{y}_i}\varphi^k \psi \,d\boldsymbol{y} = \int_{\mathbb{R}^2} \partial_{\boldsymbol{y}_i}\varphi^k \frac{\psi}{\Pi}\Pi \,d\boldsymbol{y} \rightarrow \int_{\mathbb{R}^2} \zeta^i \frac{\psi}{\Pi}\Pi \,d\boldsymbol{y} = \int_{\mathbb{R}^2} \zeta^i  \psi  \,d\boldsymbol{y}.
\end{equation}
since $\psi/\Pi \in C^\infty_c(\mathbb{R}^2) \subset L^2(\Pi).$ On the other hand, we have
\begin{equation} \label{eq:WeakDeriv2}
    \int_{\mathbb{R}^2}\partial_{\boldsymbol{y}_i}\varphi^k \psi \,d\boldsymbol{y} = - \int_{\mathbb{R}^2}\varphi^k \partial_{\boldsymbol{y}_i}\psi \,d\boldsymbol{y} \rightarrow - \int_{\mathbb{R}^2}\varphi \partial_{\boldsymbol{y}_i}\psi \,d\boldsymbol{y}.
\end{equation}
From \eqref{eq:WeakDeriv1} and \eqref{eq:WeakDeriv2}, we deduce that $ \partial_{\boldsymbol{y}_i} \varphi = \zeta^i$ holds distributionally. In this manner, we have proved our first assertion. 

Next, let us observe that the strong convergence $\varphi^k \rightarrow \varphi$ in $L^2(\Pi),$ alongside the relations $\left\langle \varphi^k \right\rangle = 0$ (valid for every positive integer $k$), allows us to infer that $\left\langle \varphi \right\rangle = 0.$ Finally, with the convergence properties of the sequence $\left\{\varphi^k \right\}$ we have at hand, checking that $\varphi$ solves \eqref{eq:PoissonAvgZero} in the sense of distributions is straightforward --- we omit the details here. 

\end{proof}


\end{document}